\title{The $k$-Fréchet distance}
\author{Hugo A. Akitaya}{Department of Computer Science, Tufts University{[Massachusetts, USA]}}{hugo.alves\_akitaya@tufts.edu}{}{supported by NSF awards CCF-1422311 and CCF-1423615, and the Science Without Borders scholarship program}
\author{Maike Buchin}{Faculty of Computer Science, TU Dortmund{[Germany]}}{maike.buchin@tu-dortmund.de}{}{}
\author{Leonie Ryvkin}{Department of Mathematics, Ruhr Universität Bochum{[Germany]}}{leonie.ryvkin@rub.de}{}{}
\author{Jérôme Urhausen}{Department of Information and Computing Sciences, Universiteit Utrecht{[Netherlands]}}{J.E.Urhausen@uu.nl}{}{supported by the Netherlands Organisation for Scientific Research under project 612.001.651}
\authorrunning{H. Akitaya, M. Buchin, L. Ryvkin, J. Urhausen}
\subjclass{
• Theory of computation $\rightarrow$ Computational geometry  \\
• Theory of computation $\rightarrow$ Approximation algorithms analysis   \\
• Theory of computation $\rightarrow$ Parameterized complexity and exact algorithms}
\keywords{Fréchet distance,  Approximation, FPT}
\newcommand{\eps}{\varepsilon}
\newcommand{\R}{\mathbb{R}}
\begin{document}

\maketitle

\begin{abstract}
We introduce a new distance measure for comparing polygonal chains: the $k$-Fréchet distance. As the name implies, it is closely related to the well-studied Fréchet distance but detects similarities between curves that resemble each other only piecewise. The parameter $k$ denotes the number of subcurves into which we divide the input curves. 
The $k$-Fréchet distance provides a nice transition between (weak) Fréchet distance and Hausdorff distance.
However, we show that deciding this distance measure turns out to be NP-complete, which is interesting since both (weak) Fréchet and Hausdorff distance are computable in polynomial time.
Nevertheless, we give several possibilities to deal with the hardness of the $k$-Fréchet distance: besides an exponential-time algorithm for the general case, we give a polynomial-time algorithm for $k=2$, i.e., we ask that we subdivide our input curves into two subcurves each. 
We also present an approximation algorithm that outputs a number of subcurves of at most twice the optimal size.
Finally, we give an FPT algorithm using parameters $k$ (the number of allowed subcurves) and $z$ (the number of segments of one curve that intersects the $\eps$-neighborhood of a point on the other curve). 
\end{abstract}

\section{Introduction}
\label{sec:intro}
During the last decades, several methods for comparing geometrical shapes have been studied in a variety of applications,
for example analyzing geographic data, such as trajectories, or comparing chemical structures, e.g., protein chains or human DNA.
The Fréchet distance has been well-studied in the past twenty years since it has proven to be helpful in several of the mentioned applications.
The Hausdorff distance, another similarity measure, has also been used in applications and can be computed more efficiently than the Fréchet distance. However, it provides us with less information by taking only the overall shape of curves into consideration, not how they are traversed.

We introduce the $k$-Fréchet distance as a distance measure in between Hausdorff and (weak) Fréchet distance.
This measure allows us to compare shapes consisting of several parts: we cover the input curves by at most $k$ (possibly overlapping) subcurves each and ask for a matching of the subcurves such that each pair of matched subcurves has at most weak Fréchet distance~$\eps$ (where $\eps >0$ is a given constant).

Thus the new measure allows us to find similarities between curves that need to be cut and reordered to be similar under the Fréchet distance. For instance this could be objects of rearranged pieces such as chemical structures or handwritten characters and symbols.
Or imagine trajectories of tourists visiting a number of sights in a city. If the $k$-Fréchet distance of two trajectories is small, the respective tourists used similar routes to get to the sights. Additionally, for small $k$ we can conclude that the tourists visited many sights in the same order.
Another example is displayed in Figure~\ref{fig:different-k}, where we compare three different variants of writing the letter k by hand. Note that we deal with disconnected curves by concatenating the respective subcurves. Of course, we can easily identify that all three of them are k's by using the Hausdorff distance to compare them to a ``generic'' k, but the $k$-Fréchet distance provides us with more information: the 2-Fréchet distance between the second and the third k is high, because the strokes are set differently. Those k's are unlikely to be written by the same person. The 3-Fréchet distance, however, is small, because the letter consists of at most 3 strokes in general.

	\begin{figure}[ht]
		\centering
		\includegraphics[scale=0.6]{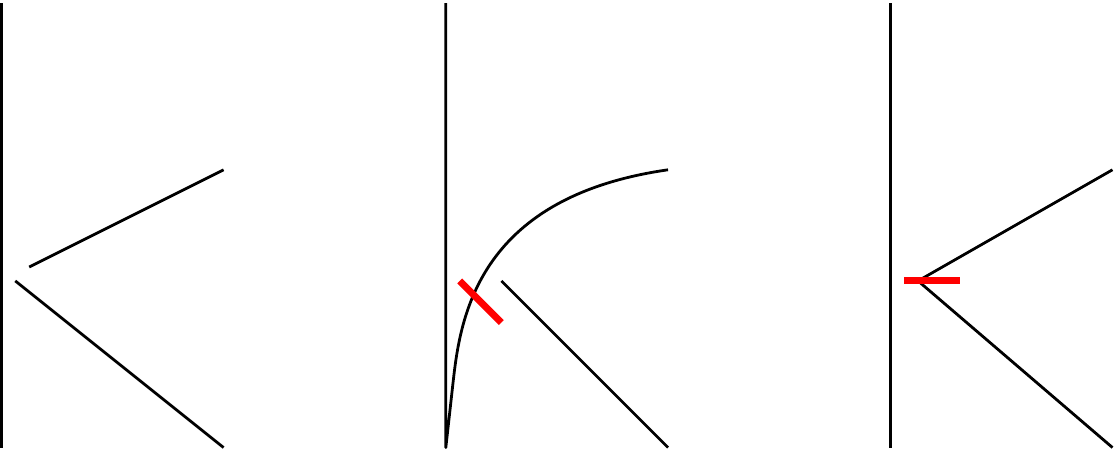}
		\caption{Three `k's written in a different way. For the middle and the right one, the 2-Fréchet distance is high and the 3-Fréchet distance is low.}
	\label{fig:different-k}
	\end{figure}

Characterizing the mentioned variants of the Fréchet distance next to the Hausdorff distance intuitively shows that the new distance measure bridges between weak Fréchet and Hausdorff distance.
As is common in the literature, we use the following analogy: we interpret our input curves as two paths, and imagine that these have to be traversed by a man and a dog, each of them walking on one of the paths.
For the (weak) Fréchet distance we ask for the length of a shortest leash so that man and dog can traverse their respective curves. In doing so they may choose their speeds independently. For the variant of the weak Fréchet distance, man and dog are also allowed to backtrack.

The Hausdorff distance finds for each point on either curve the closest point on the other curve, and takes the largest of the obtained distances. In terms of man and dog we do not need them to traverse the curves as such.
We just ask that for any fixed position on either path there is a position on the other one such that man and dog can stand on their respective positions using a leash of a fixed length. One could say they may ``jump'' on their curves any number of times as long as both man and dog can reach all positions on their respective curves without exceeding the given maximum distance, i.e., the leash length.
The $k$-Fréchet distance limits this number of jumps to a constant $k$ (actually, we have $k-1$ jumps), so we want man and dog to traverse their paths piecewise.
Note that we use the weak Fréchet distance as underlying distance measure, so we allow backtracking.

\textbf{Related work.}
Efficient algorithms were presented
for computing the Fréchet distance and the weak Fréchet distance
by Alt and Godau in 1995. There, they first introduced the concept of the free space diagram, which is key to computing this distance measure and its variants~\cite{altgodau}. Following their work, numerous variants and extensions have been considered. Here we mention only a few results related to our work.
Alt, Knauer and Wenk compared the Hausdorff to the Fréchet distance and discussed $\kappa$-bounded curves as a special input
instance~\cite{knauer}. In particular, they showed that for convex closed curves Hausdorff distance equals Fréchet distance.
For curves in one dimension Buchin et al.~\cite{walkdog} proved equality of Hausdorff and weak Fréchet distance using the
well-known Mountain climbing theorem~\cite{mountain}.
For computing the Hausdorff distance, Alt, Braß, Godau, Knauer and Wenk gave a thorough overview in \cite{hausdorff}.
Buchin~\cite{b-cfdts-07} gave the characterization of these measures in free space, which motivated our study of $k$-Fréchet distance.

For $c$-packed curves, Driemel, Har-Peled and Wenk presented a $(1+\varepsilon)$-approximation algorithm, which determines the Fréchet distance in near linear time~\cite{driemel}. An interesting variant was presented by Gheibi et al.: they studied the weak Fréchet distance but minimized the length of the subcurves on which backtracking is necessary \cite{gheibi}.
For general polygonal curves, Buchin et al.~\cite{Buchin2017} recently slightly improved the original algorithm of Alt and Godau, while Bringmann~\cite{Bringmann} showed that unless SETH fails no strongly subquadratic algorithm for the Fréchet distance exists.
Buchin, Buchin and Wang studied partial curve matching, where they presented a polynomial-time algorithm to compute the ``partial Fréchet similarity'' \cite{partialcurve}, and a variation of this similarity was presented by Scheffer in \cite{scheffer}. Also, Driemel and Har-Peled defined a Fréchet distance with shortcuts~\cite{shortcut}, which was proven to be the first NP-hard variant of the Fréchet distance in~\cite{shortcut2}.

Interestingly, both Hausdorff and (weak) Fréchet distance are computable in polynomial time. However, the $k$-Fréchet distance, as a distance measure that bridges between the two of them, proves to be NP-complete.

Buchin and Ryvkin first presented the $k$-Fréchet distance at EuroCG2018~\cite{eurocg}. However, there are in fact two variants of $k$-Fréchet distance: the cut and the cover variant. These differ in that the first asks to cut, i.e., partition the curves, whereas the latter asks to cover the curves with subcurves. In their previous work, Buchin and Ryvkin showed NP-hardness of the cut-variant and gave the 2-approximation for the cover variant. In this paper we study the cover variant, and discuss the cut variant only in the conclusion.

\textbf{Overview.} In the next chapter, we introduce and formally define the $k$-Fréchet distance. In Chapter~\ref{sec:cover-hard}, we determine its hardness in two steps: first we prove NP-hardness of a simpler auxiliary problem to gain some intuition (Section~\ref{sec:boxes}) for the then following reduction. We describe the construction in Section~\ref{sec:cover-hard-construction} and analyze its correctness in Section~\ref{sec:cover-hard-analysis}.
Finally, we present our algorithmic findings in Chapter~\ref{sec:cover-algo}. We give an XP-algorithm with parameter $k$, which even works in reasonable polynomial time for very small $k$. There is also an FPT algorithm using two parameters, again the selection size $k$ and the parameter $z$, which indicates how ``entangled'' the input curves are.
To complete the possible algorithmic approaches, we give a 2-approximation algorithm. 

\section{Preliminaries}
\label{sec:prelims}
First we define the \emph{Hausdorff distance}~\cite{knauer} for curves $P, Q \colon [0,1] \to \mathbb{R}^d$ as
	\[\delta_{\operatorname{H}}(P,Q) = \max(\tilde{\delta}_H(P,Q),\tilde{\delta}_H(Q,P)) \mbox{, where}\]	
\[\tilde{\delta}_{\operatorname{H}}(P,Q) = \max_{t_1 \in [0,1]} \min_{t_2 \in [0,1]}\Vert P(t_1)-Q(t_2)\Vert\]
denotes the directed Hausdorff distance from $P$ to $Q$. By  $\Vert \cdot \Vert$ we refer to the Euclidian norm in $\mathbb{R}^d$.

Now recall the \emph{Fréchet distance}~\cite{altgodau}:
For curves $P,Q \colon [0,1] \to \mathbb{R}^d$ it is given by
	\[\delta_{\operatorname{F}}(P,Q) = \inf_{\sigma} \max_{t \in [0,1]} \Vert P(t) - Q(\sigma(t)) \Vert ,\]
where the reparametrisations $\sigma \colon [0,1] \to [0,1]$ range over all orientation-preserving homeomorphisms.
A variant is the \emph{weak Fréchet distance} $\delta_{\operatorname{wF}}$ where both curves are reparameterised by $\sigma$ and $\tau$,
respectively, which range over all continuous surjective functions.

The Fréchet distance is typically illustrated by a man and a dog walking on the two curves where both may choose their speed
independently. For the Fréchet distance man and dog may not backtrack, whereas for the weak Fréchet distance they may.
The (weak) Fréchet distance corresponds to the shortest leash length allowing them to traverse the curves.

A well-known characterisation, which is key to efficient algorithms for computing both weak and (strong) Fréchet distance uses the free space diagram, which was introduced by Alt and Godau~\cite{altgodau}. First we recall the free~space $F_\varepsilon$:
	\[F_\varepsilon(P,Q)=\{(t_1,t_2) \in [0,1]^2 \colon \Vert P(t_1)-Q(t_2)\Vert \leq \varepsilon \}.\]
For piecewise-linear $P$ and $Q$, the free space diagram puts this information into an $(n \times m)$-grid where $n$ and $m$ are the numbers of segments in $P$ and $Q$ respectively. For the rest of this paper we assume that $m = \mathcal{O}(n)$ to simplify runtime expressions.

The Fréchet distance of two curves is at most a given value $\varepsilon$ if there exists a monotone path through the free space connecting the bottom left to the top right corner.
For the weak Fréchet distance to equal at most $\varepsilon$ such a path need not be monotone.
It may also start and end somewhere other than the corners of the diagram, as long as it touches all four boundaries.

We define further terms connected to the free space diagram below:
A \emph{component} of a free space diagram is a connected subset $c\subseteq F_\varepsilon(P,Q)$.
A set $S$ of components \emph{covers} a set $I\subseteq [0,1]_P$ of the parameter space (corresponding to the curve $P$) if $I$ is a subset of the projection of $S$ onto said parameter space, i.e., $\forall x\in I \colon \exists c\in S, y\in [0,1]_Q \colon (x,y)\in c$.
Covering on the second parameter space is defined analogously.
This means the weak Fréchet distance is smaller than $\eps$ if there is one component in $F_\varepsilon(P,Q)$ that covers both parameter spaces. 
Similarly, the Hausdorff distance can be tested by checking whether the set of all components covers both parameter spaces.
In this paper we extend this concept to also account for the number of components needed to cover the parameter spaces.

We define the \emph{$k$-Fréchet distance} $\delta_{\operatorname{kF}}(P, Q)$ as the minimal $\eps$ such that there is a set of at most $k$ components of $F_\eps(P,Q)$ that covers both parameter spaces.
That is, we cover the curves $P$ and $Q$ by at most $k$ pieces (i.e., subcurves) such that there is a matching of the pieces where two matched subcurves have small weak Fréchet distance. Note that we do not demand that the subcurves are disjoint.
In the man and dog analogy, we allow man and dog to ``jump'' on their respective curves, i.e., they may skip parts of their paths and come back later. We still ask for a complete traversal, but some parts of the curves may be traversed multiple times with jumps in between.

The decision problem for this distance measure asks whether for a fixed value of $k$, $\delta_{\operatorname{kF}}(P, Q)$ is smaller than or equal to a given $\eps$.
Naturally, for a fixed real $\varepsilon>0$, we would like to cut the curves into as few subcurves as possible (optimization version).
By definition, the $k$-Fréchet distance lies in between the Hausdorff and the (weak) Fréchet distances:
    \[\delta_{\operatorname{H}}(P,Q) \leq \delta_{\operatorname{kF}}(P, Q) \leq \delta_{\operatorname{wF}}(P,Q) \leq \delta_{\operatorname{F}}(P,Q). \]
Also, the $k$-Fréchet distance decreases as $k$ increases: for $k=1$ it equals the weak Fréchet distance, whereas for $k$ sufficiently large, e.g., $k\geq n^2$, it equals the Hausdorff distance.

Figure~\ref{fig:comparison} illustrates this property.
The diagram on the left corresponds to a fixed $\eps_1$. We observe that there is one connected component in the free space $F_{\eps_1}(P,Q)$ that projects surjectively onto both parameter spaces. We therefore have $\eps_1\geq \delta_{\operatorname{wF}} \,(=\delta_{\operatorname{1F}})$.
The diagram in the middle depicts $F_{\eps_2}(P,Q)$ for a value $\eps_2$ slightly smaller than $\eps_1$. In that case two components cover the parameter spaces, which means $\eps_2\geq\delta_{\operatorname{2F}}$.
The free space $F_{\eps_3}(P,Q)$ shown on the right for an $\eps_3$ smaller than $\eps_2$ consists of three components and all three are necessary to cover the parameter spaces. Furthermore, reducing the value of $\eps_3$ even more would not split up the components into smaller subcomponents, but would just result in the set of all components not covering the parameter spaces any more. So we have $\eps_3 \geq \delta_{\operatorname{H}} = \delta_{\operatorname{3F}}$.

	\begin{figure}[ht]
		\centering
		\includegraphics[scale=0.6]{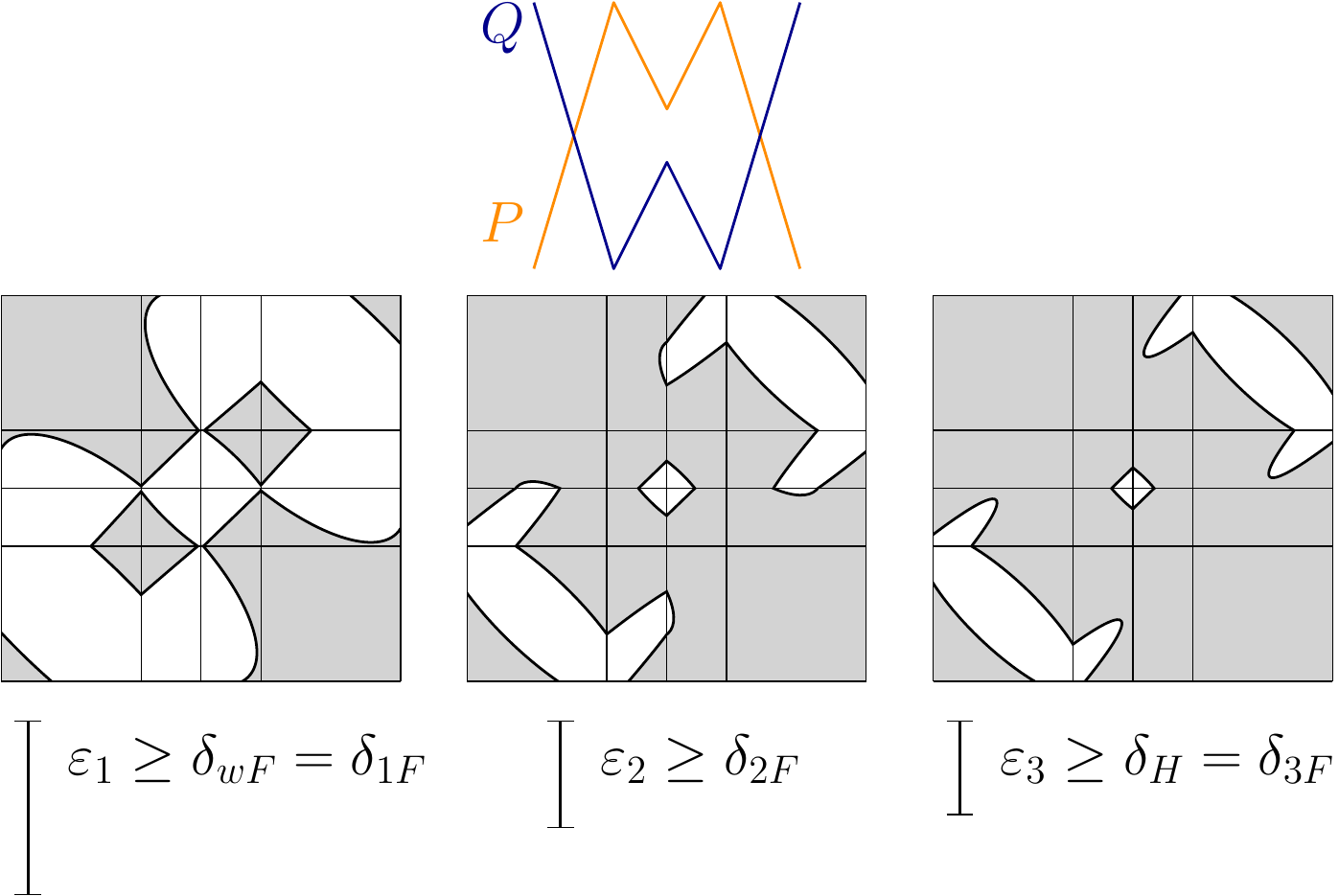}
		\caption{Comparison of weak Fréchet, $2$-Fréchet and Hausdorff distance of curves $P$ and $Q$.}\label{fig:comparison}
	\end{figure} 

\section{Hardness results}
\label{sec:cover-hard}
In this section we prove that the problem of deciding the $k$-Fréchet distance of two polygonal curves $P,Q$ is NP-complete. The optimization variant, i.e., minimizing the number of components $k$ is therefore NP-complete, too.

To give some intuition for the later proof we first present a reduction from the well-known 3-SAT problem to the problem of covering two sides of a rectangle by selecting a number of smaller rectangles, or boxes that are situated inside it. This problem (we call it the box problem, for short) mimics selecting the components in the free space to cover the parameter spaces but we do not ask to actually construct a free space, that is we do not ask to find curves that realize this specific free space.

Afterwards we do a 
reduction from rectilinear monotone planar 3-SAT \cite{sat} to prove hardness of the actual $k$-Fréchet distance problem: we first describe the gadgets we use, continue by constructing our actual curves and analyzing their complexity so that in the end we can give the reduction itself. 

\subsection{Gaining intuition for the free space: The box problem}
\label{sec:boxes}
We want to reduce from the following classical NP-complete satisfiability problem \cite{gareyjohnson}:

\vspace{\baselineskip}
\textbf{\textsc{3-SAT}:}\\
\textsc{Input:} a boolean formula with $n$ variables written as conjunction of $m$ clauses, where a clause is a disjunction of at most 3 literals;\\
\textsc{Output:} ``Yes'' if there exists an assignment of the variables such that the formula's output is true, ``No'' otherwise.\\
\vspace{0.5pt}

\noindent Here we want to use it to show hardness of the aforementioned problem:

\vspace{\baselineskip}
\textbf{\textsc{Box problem}:}\\
\textsc{Input:} a large rectangle $B$ with a set $A$ of smaller, interior-disjoint rectangles $b_i$ (all axis parallel) inside, a natural number $k$; \\
\textsc{Output:} ``Yes'' if there exists a selection of at most $k$ rectangles from $A$ such that their union surjectively projects onto the bottom and left boundary of the rectangle $B$, ``No'' otherwise.\\
\vspace{0.5pt}

Given any instance of a 3-SAT formula we want to build a bounding box $B$ containing a number of boxes $b_i$ such that we can find a covering selection of size $k$ if and only if there is an assignment for the formula that outputs true. A \emph{covering selection} of boxes is a subset of the $b_i$ that projects surjectively onto the bottom and left boundaries of $B$. For this we build boxes $b_i$ that correspond to the variables and any satisfying assignment of the variables can be directly ``translated'' into a covering selection of the $b_i$.
A complete construction is depicted in Figure~\ref{fig:np-boxes}, a detailed description of the coordinates can be found in Appendix~\ref{sec:appendix-box}.
We assume that no clause contains duplicates, i.e., no clause is of the form $v\lor v \lor w$. The duplicates can be deleted without changing the boolean function induced by the formula. Note that clauses of the form $v \lor \neg v \lor w$ are allowed.
Additionally we require that throughout the formula each literal appears at least once positive and at least once negated.
For each variable $v$ where this is not the case we add the clause $v\lor \neg v$ (colored dark green in Figure~\ref{fig:np-boxes}). These clauses are always fulfilled and therefore do not change the output of our boolean formula. We add at most $n$ clauses in this way, which means that the size of the formula only changes polynomially in the input size.

	\begin{figure}[ht]
		\centering
		\includegraphics[scale=0.6]{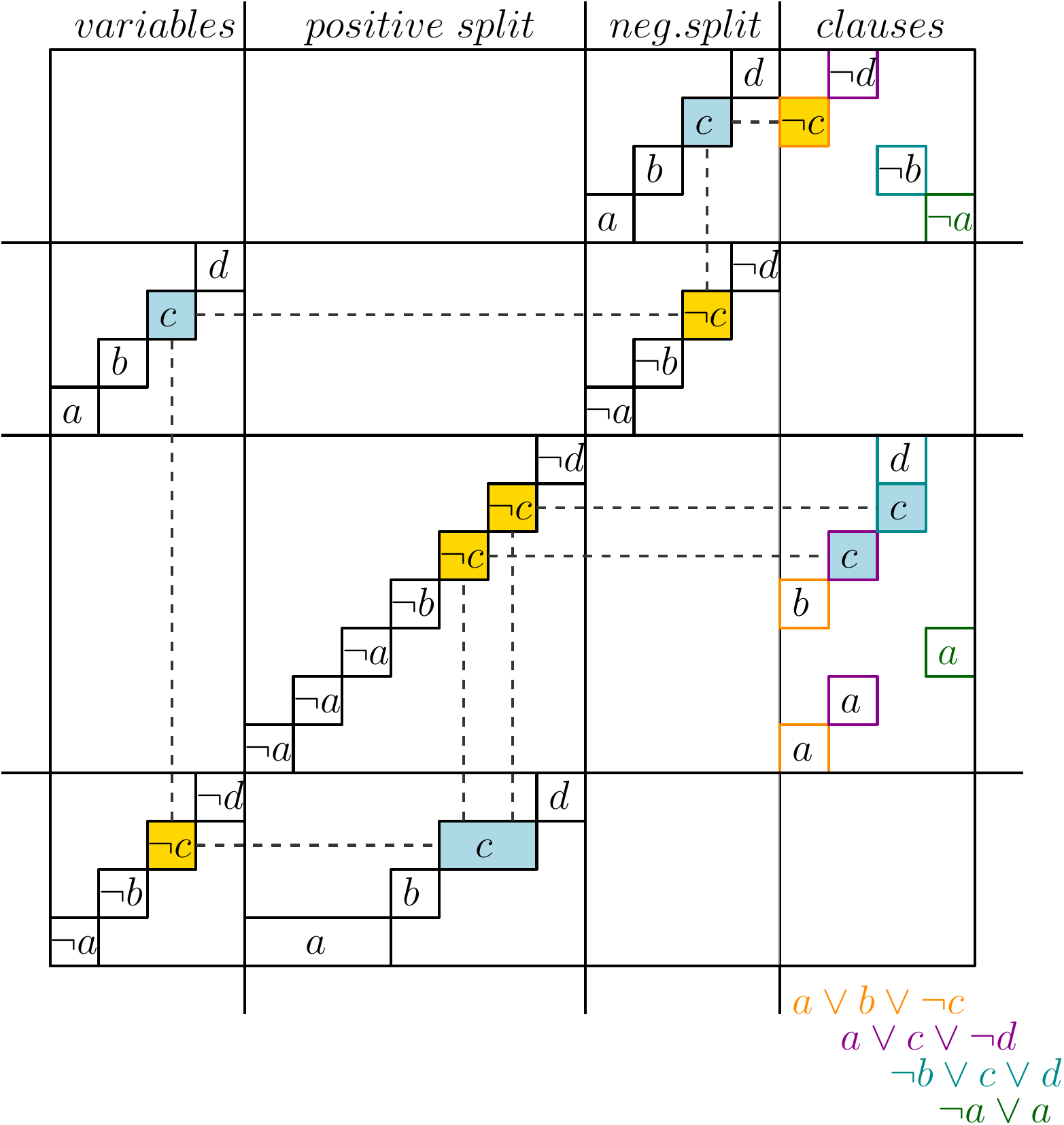}
		\caption{Construction of the box problem instance and propagation of assignment.}\label{fig:np-boxes}
	\end{figure}

In Figure~\ref{fig:np-boxes} we constructed a box problem instance from the formula $(a\lor b \lor \neg c) \land (a \lor c \lor \neg d)\land (\neg b \lor c \lor d)$. Since $\neg a$ does not occur in any clause we added the clause $(\neg a \lor a)$. 

The construction briefly works as follows: for each variable $v$ we add two boxes in the variable columns, one labeled $v$ and the other one $\neg v$. The number of occurrences of each literal is treated in the split columns: we add wider boxes for literals that occur multiple times and as many unit width boxes of the negated literal above. Finally, we add boxes corresponding to occurring literals in the rightmost columns, where a unit width column corresponds to a clause, respectively.
Note that each unit row contains exactly two boxes of opposite labels.
We set $k$ to be equal to half the number of boxes, so in order to cover the vertical boundary we have to choose one box per row.
The bounding box of the small rectangles is called $B$.

Now, to cover the variable columns we need to chose an assignment: either we select a variable or its negated version. As displayed for variable $c$, the choice made for the first column determines the selection of boxes for the split gadgets and the clauses. The colored boxes correspond to choosing $c=\mbox{\it true}$ (blue) or $c=\mbox{\it false}$ (orange). Note that both choices imply selecting the same number of boxes. In order to cover the clause columns we need to select at least one literal of each clause.
The proof of the following theorem can also be found in Appendix~\ref{sec:appendix-box}.

\begin{theorem}\label{thm:box}
The box problem is NP-complete.
\end{theorem}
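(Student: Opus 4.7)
The plan is to establish membership in NP and then prove NP-hardness by verifying correctness of the construction already described.

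Membership in NP is immediate: a selection of at most $k$ boxes is a polynomial-size certificate, and one can compute the union of the selected boxes' projections onto the left and bottom edges of $B$ in polynomial time and check that both are fully covered.

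For NP-hardness, I would reduce from 3-SAT. Given a formula $\varphi$ with $n$ variables and $m$ clauses, first perform the preprocessing described in the text: delete duplicate literals inside clauses and, whenever some literal fails to appear both positively and negatively, append the clause $v\lor\neg v$. This adds at most $n$ clauses and preserves the boolean function, so the size of the formula changes only polynomially. Then build the box instance exactly as in Figure~\ref{fig:np-boxes} and set $k$ to be half the total number of boxes. The whole reduction is clearly polynomial in $|\varphi|$.

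For the forward direction, a satisfying assignment $\alpha$ yields a covering selection: in each variable column pick the box labeled $v$ if $\alpha(v)$ is true and the box labeled $\neg v$ otherwise, then propagate this choice consistently through the split columns and finish by selecting, in each clause column, a box corresponding to a literal that $\alpha$ satisfies. A direct count verifies the budget $k$ is met. The left boundary is covered because each unit row contains exactly two boxes of complementary labels and the selection picks one from every such pair; the bottom boundary is covered because $\alpha$ satisfies every clause.

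The reverse direction is the main obstacle. The pivotal observation is that the left boundary is partitioned into unit rows, each met by a single pair of complementary boxes, so a covering of size $k = $ (half of all boxes) must select exactly one box from every pair; this forces an assignment-like choice in every column. I would then argue column by column that these choices are globally consistent. In the variable column the choice defines $\alpha(v)$; in each split column the geometry is arranged so that the only row-covering selection of the prescribed size is the one that propagates $\alpha(v)$ unchanged to the right; and in each clause column the bottom edge can be covered only if some literal box sitting there is selected, which translates to the clause being satisfied by $\alpha$. The delicate step, which I expect to be the bulk of the work, is the split-column analysis: verifying that no ``mixed'' selection in a split gadget can simultaneously meet its row-budget and free up a box to be spent elsewhere. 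This should follow from a row-by-row accounting inside the gadget, using the fact that wider literal boxes and unit negated-literal boxes are placed so that each row leaves a single choice consistent with the propagated literal.
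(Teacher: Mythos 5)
Your overall plan — NP membership by direct verification, hardness by reduction from 3-SAT using the given construction, forward direction by translating an assignment into a one-box-per-row selection, reverse direction by observing that the budget forces exactly one box per row and then propagating choices across columns — is essentially the route the paper takes.

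The one spot I would push back on is the way you frame the ``delicate step.'' You phrase it as ruling out a mixed selection on the grounds that it cannot ``simultaneously meet its row-budget and free up a box to be spent elsewhere.'' That is not quite the right thing to verify, and in fact it is not true as stated: a variable can be treated inconsistently without breaking the budget. Concretely, for some variable $v_i$ one may choose the $\neg v_i$ box in the variable column of row $i$ \emph{and} the $v_i$ box in the variable column of row $n+s^+_n+i$; then both wide split boxes are unchosen, both sets of unit split boxes are forced in, and no clause box labeled $v_i$ or $\neg v_i$ is selected — yet every row still gets exactly one box and the count is unchanged. So you cannot kill mixed choices on budget grounds. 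What actually drives the reverse direction is column coverage running the other way: if a clause box labeled $v_i$ is selected in some split row, that row's unit $\neg v_i$ box is not selected, so the corresponding split column must be covered by the wide $v_i$ box, which in turn empties the $\neg v_i$ slot in the variable column and forces the $v_i$ variable box, and so on down the chain. The induced assignment should therefore be read off from (say) the wide split boxes, and the claim to prove is that any selected clause box forces the corresponding wide box — hence the literal is set in the satisfying direction — while the mixed case is harmless because it contributes no clause boxes at all. With that correction the reverse direction goes through, and this is the lemma that the paper's proof is implicitly relying on (the paper's own phrasing, ``for each variable $v$ either all boxes labeled $v$ or all boxes labeled $\neg v$ have been chosen,'' is slightly too strong for the same reason).
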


We can interpret the box problem as the problem of finding a selection of components in the free space that cover the parameter spaces. The small boxes can be seen as bounding boxes of actual components (for the projection there is no difference) and the bottom and left boundary of the large box $B$ correspond to the parameter spaces. 
This hardness proof, especially the construction of the boxes, provides us with the key ideas to prove hardness of the $k$-Fréchet distance. Next, we construct actual curves where certain intervals on the parameter spaces of the free space diagram each have two components that could cover them. As with the box problem, the choice we make for one of those intervals determines the choices for other intervals as we still need to ensure that the selection size is minimal in the end. The propagation of choices works in the same manner for the box problem as for the $k$-Fréchet distance problem.

\subsection{Construction}
\label{sec:cover-hard-construction}

We use the following variant of the 3-SAT problem in this subsection.

\vspace{\baselineskip}
\textbf{\textsc{Rectilinear monotone planar 3-SAT}:}\\
\textsc{Input:} a 3-SAT formula with only all positive or all negated variables in each clause, embedded as a graph where all edges are rectilinear and non-crossing; variables are drawn as vertices on a horizontal line, positive clauses are vertices drawn above this line and negative clauses are drawn below;\\
\textsc{Output:} ``Yes'' if there exists a satisfying assignment for the variables, ``No'' otherwise.\\
\vspace{0.5pt}

Note that we assume that each variable appears in at least one positive and one negative clause. Otherwise we could simply define the occurring literal to be true (or false, respectively) and omit the clauses the literal appears in because they would be fulfilled regardless of the assignment of the other participating literals.

	\begin{figure}[ht]
		\centering
		\includegraphics[scale=0.6]{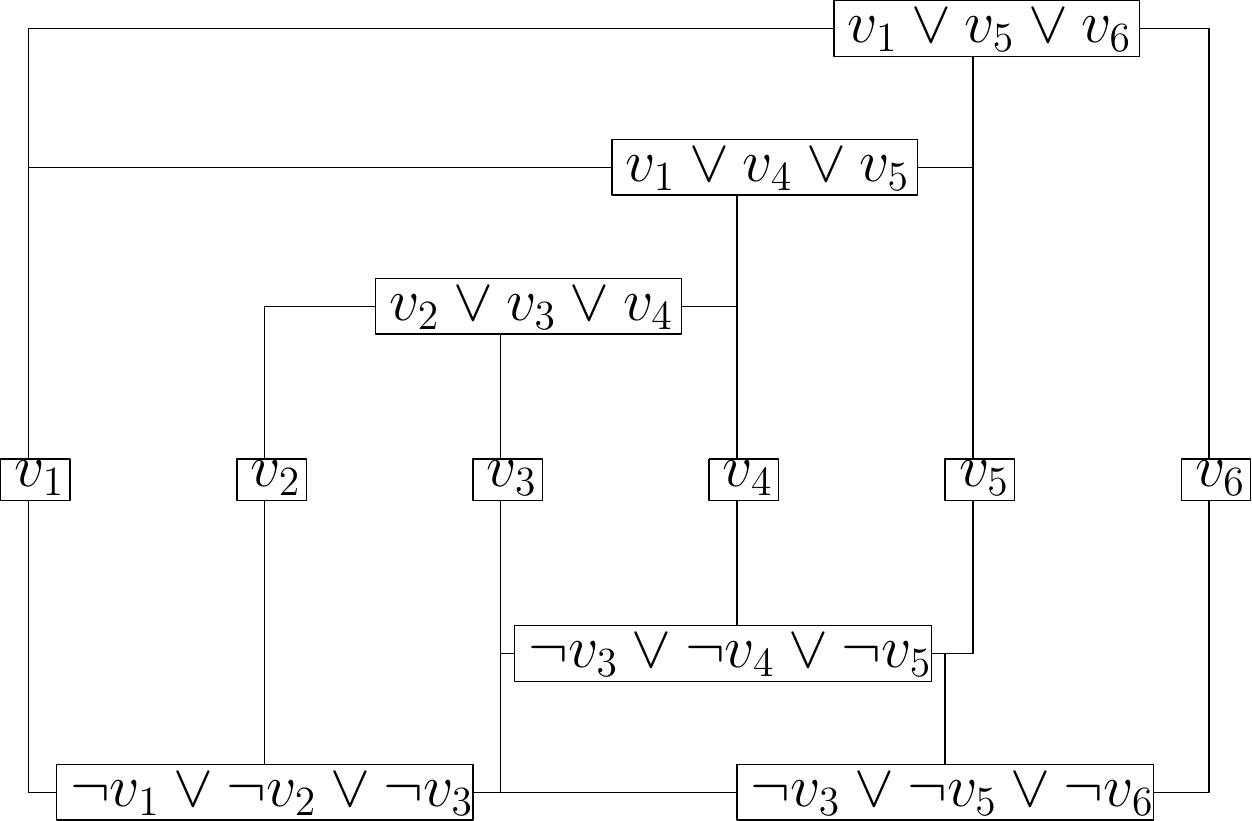}
		\caption{Instance of rectilinear monotone planar 3-SAT.}\label{fig:input}
	\end{figure}

As apparent, we can draw any such graph on a grid, which is useful when constructing our curves and analyzing their complexity.
Since rectilinear monotone planar 3-SAT is NP-complete \cite{sat}, we prove hardness of the cover distance problem by reduction from it.

\vspace{\baselineskip}
\textbf{\textsc{$k$-Fréchet distance problem}:}\\
\textsc{Input:} Two polygonal curves $P$ and $Q$, a distance $\varepsilon$ and a natural number $k$;\\
\textsc{Output:} ``Yes'' if there exists a selection of at most $k$ components in the free space diagram $F_\varepsilon$ such that their union projects surjectively onto both parameter spaces, ``No'' otherwise.\\
\vspace{0.5pt}

Our goal is to construct curves that mimic any input instance of a rectilinear monotone planar 3-SAT graph and show that in the free space resulting from these curves we can find a covering selection of size $k$ if and only if there exists a satisfying assignment for the formula. The full construction can be found in Appendix~\ref{sec:appendix-hardness}.

Overall we create wire and clause gadgets to represent variables and clauses,
where wires correspond to edges of the input graph.
They are connected as the given embedding of the 3SAT instance.
Wire gadgets allow a boolean choice that is propagated consistently throughout the wire.
Clause gadgets test whether at least one incoming wire caries an appropriate choice.

Figure~\ref{fig:wire1} shows a wire gadget. Both the yellow and the blue curves run along the sides (the vertical parts of the curves, which we call \emph{base curves}) and form \emph{spikes}. The sides are uninteresting for the analysis because the segments forming it can only be covered by larger components that are always part of any covering selection.
The value $\eps$ is chosen such that two adjacent spikes are just within distance $\eps$.
It follows that the spikes induce components in the free space diagram that are similar to the boxes of Subsection~\ref{sec:boxes}.
We say that a spike $s$ is \emph{covered} by an adjacent spike $t$ of the other curve if the component of the free space diagram that covers the two intervals induced by these spikes is chosen for the covering selection. After the construction we choose $k$ such that each blue spike in any gadget can only be covered by one single adjacent yellow spike. The choice for one blue spike 
must be consistent along
the wire and encodes the assignment of the corresponding variable.

\begin{figure}[ht]
		\centering
		\includegraphics[scale=0.7]{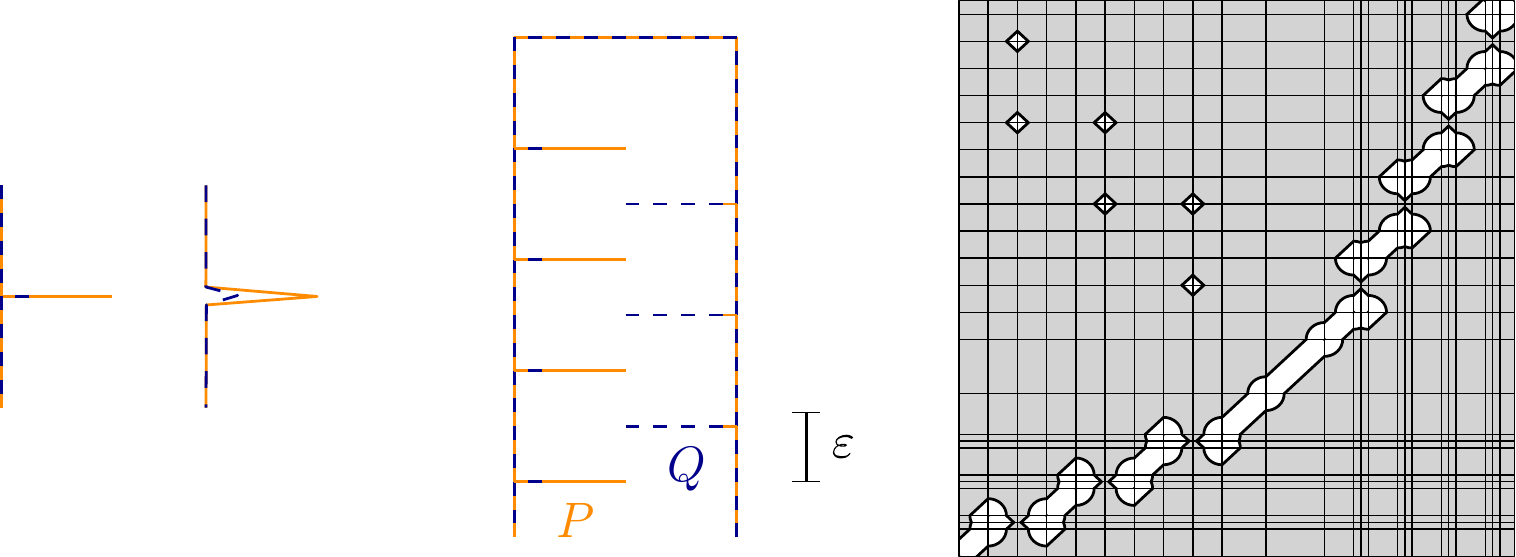}
		\caption{(Left) A spike and a small perturbation of it. (Right) The wire gadget and its corresponding free space diagram.
		Note that we connected the curves to give a small example, but the horizontal segment on top is not part of the gadget itself.}\label{fig:wire1}
	\end{figure}

Of course, we need a number of other gadgets, too. As mentioned, the wires correspond to edges in the rectilinear monotone planar 3-SAT instance. To draw them coherently we need to make sure we can make 90° turns (so called \emph{bends}) and do T-crossings, i.e., \emph{split} a wire into two. Last but not least we need to build a \emph{clause gadget} where three wires connect. In Figure~\ref{fig:gadgets}, we show a bend and a clause gadget, connected through wires. The only basic gadget not shown in Figure~\ref{fig:gadgets} is the split, which looks similar to the clause.

\begin{figure}[ht]
		\centering
		\includegraphics[scale=0.7]{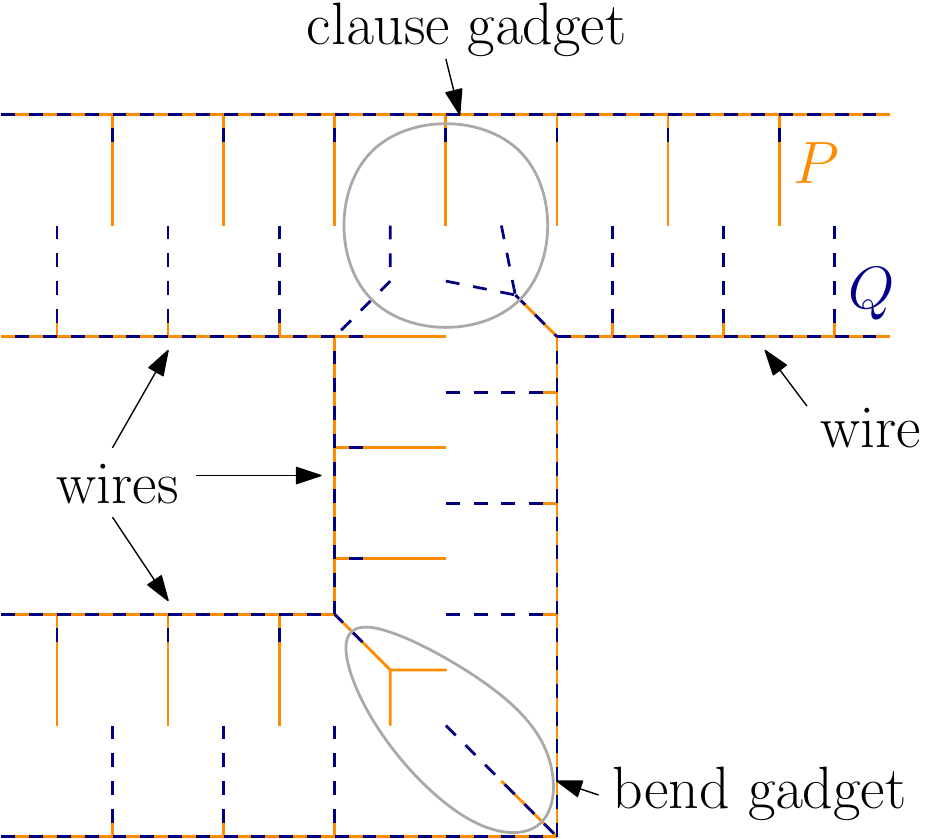}
		\caption{A small example part of the constructed curves consisting of wires, a bend and a clause gadget.}
		\label{fig:gadgets}
	\end{figure}

Additionally, we need to make sure that both curves are connected. In order to do so, we establish a number of other gadgets: first of all, there is a \emph{connection gadget} that enables us to connect the two base curves of $P$ and $Q$, respectively. The resulting curves are closed; to prove hardness for non-closed curves we apply the \emph{scissor gadget}. Finally, we sometimes need to change which of the curves has  spikes on a specific side to draw the other gadgets consistently, so we also built a \emph{color gadget} to ``switch'' the color pattern of the spikes.

At last we want to connect all gadgets such that the resulting curves follow the embedding of the input graph $G$. Recall that the 
input is a grid embedding.
We first scale the grid by a factor of $2^{10}$ to place all gadgets consistently. Note, that we have to deal with 2-clauses and take into account that our split gadget is directed, so we need workarounds for cases where the split has to face in other directions. Afterwards we are able to draw the curves' vertices on grid points only. 
To give an order in which the gadgets are traversed, we consider the input graph $G$. We want to traverse all edge of $G$ twice, once per inner, once per outer base curve. To do so, we have to ``walk around'' each face of $G$. To switch between faces we use connection gadgets. We obtain a traversal order of the faces by computing a minimum spanning tree of the dual graph. 
A traversal of the tree visits each node at least once (see Appendix~\ref{sec:appendix-building} for a detailed description). 

Finally, it remains to prove that our construction works in the sense that the curves have $k$-Fréchet distance $\eps$ if and only if the specific 3-SAT instance is satisfiable. 

\subsection{Analysis}
\label{sec:cover-hard-analysis}
First, we note that the complexity of our curves is polynomial in the size of our input instance: the numbers of variables and clauses, but also the number of splits and the length of the edges determine the number of spikes and therefore also the number of components in the free space diagram. A spike induces either two or three components, depending on whether it is part of a specific gadget, i.e., a clause, or not.
In addition, the gadgets induce a number of components, called \emph{clutter}, that are always part of a covering selection.
Some gadgets also induce a constant number of unnecessary components that are never chosen.

Our goal is to cover the parameter spaces with $k$ components. We definitely need to select all clutter components and we need to cover all spikes, therefore we need to select (at least) one component per spike.
We set $k$ to be the number of clutter components plus the number of blue spikes (spikes of $Q$).
It follows that each blue spike can only be covered once, which ensures that choices are propagated.

\begin{theorem}\label{thm:np}
It is NP-complete to decide whether $\delta_{\operatorname{kF}}(P,Q) \leq \eps$ for given polygonal curves $P$ and $Q$, integer $k$, and $\eps>0$ where $\delta_{\operatorname{kF}}$ denotes the $k$-Fréchet distance.
\end{theorem}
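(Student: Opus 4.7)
The plan is to prove NP-completeness in the standard two pieces. Membership in NP is immediate: given a candidate selection of at most $k$ components of $F_\eps(P,Q)$, we can verify in polynomial time that their union projects surjectively onto both parameter spaces, since for piecewise-linear $P,Q$ the free space diagram has polynomial size and its components are computable efficiently (see~\cite{altgodau}).

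For NP-hardness, I reduce from rectilinear monotone planar 3-SAT using the construction of Section~\ref{sec:cover-hard-construction}. Given an instance with $n$ variables and $m$ clauses embedded on a grid, I first rescale the grid by $2^{10}$ so that all gadgets (wire, bend, split, clause, connection, color, scissor) can be placed on integer coordinates without overlapping. The resulting polygonal curves $P$ and $Q$ then have complexity polynomial in $n+m$: every wire contributes a number of spikes proportional to its (polynomial) length, and every other gadget contributes only $O(1)$ additional spikes and clutter components. The parameter $k$ is set to the number of clutter components plus the number of blue spikes of $Q$, so $k$ is polynomial in the input size.

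The heart of the argument is the two-way equivalence between satisfying assignments and covering selections of size at most $k$. In the forward direction, from a satisfying assignment I select all clutter components together with, inside each wire gadget, the family of ``blue covered by yellow'' components dictated by the truth value of the corresponding variable; bend, split and color gadgets are covered by propagating this choice, and in each clause gadget at least one incoming wire carries the satisfying literal, which suffices to cover the clause-specific spikes without exceeding the budget. In the backward direction the budget is tight: every clutter component must be selected, and every blue spike must be covered, so each blue spike is covered by exactly one adjacent yellow spike (by choice of $\eps$, non-adjacent spikes of opposite curves are too far apart to share a component). This forces a coherent boolean choice along every wire gadget, which is transported through bends, splits, and color changes, and a clause gadget is fully coverable only if at least one of its three incoming wires supplies the appropriate truth value. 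Reading off these values yields a satisfying assignment.

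The main obstacle will be the gadget-by-gadget geometric verification that each gadget locally enforces exactly the intended constraint on covering selections: the wire must propagate a single boolean choice, the split must duplicate it, the clause must behave as a disjunction, and the global gluing via connection and scissor gadgets must not create shortcuts that allow a blue spike to be covered by a non-adjacent yellow spike. Here the box-problem intuition from Section~\ref{sec:boxes} does the heavy lifting: each spike induces a bounded component of $F_\eps(P,Q)$ whose projection behaves like a small box covering an interval of the parameter space, and the $\eps$-threshold ensures only adjacent spikes share a component. Once this correspondence is checked locally for every gadget, the combinatorial analysis reduces to exactly the selection-counting argument already used for Theorem~\ref{thm:box}, yielding the claimed NP-hardness.
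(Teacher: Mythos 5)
Your overall reduction is the same as the paper's: reduce from rectilinear monotone planar 3-SAT, build wire/bend/split/clause (plus connection/color/scissor) gadgets, set $k$ to the number of clutter components plus the number of blue spikes so that the budget is tight, and then read a truth assignment off the covering selection. Membership in NP is handled identically. So the route is right.

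The genuine gap is in the backward direction, exactly where you flag ``the main obstacle.'' You assert that a tight budget ``forces a coherent boolean choice along every wire gadget, which is transported through bends, splits, and color changes,'' and then wave at the box-problem intuition to finish. But the box problem's selection-counting argument does not directly transfer: there, every row of the bounding box has exactly two candidate boxes, one labeled $v$ and one $\neg v$, so the cover structure is trivially a per-variable binary choice. In the curve construction the free space instead has staircases with splits, and the key nontrivial point is that the split spike can be covered in two structurally different ways (by the entry-wire blue spike alone, or by \emph{both} exit-wire blue spikes), so propagation is not automatic — one must argue that if a clause spike is covered by some blue spike $b(c)$ in one of the incoming wires, then the \emph{central} spike of the corresponding variable is covered on the side consistent with satisfying that clause. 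The paper closes this by an induction on the ``staircase distance'' between $b(c)$ and the central spike, treating ordinary wire spikes and split spikes as separate cases in the inductive step. Your sketch does not supply this induction, and also leaves implicit the counting fact it rests on (no single component of $F_\eps$ can cover two distinct blue spikes, hence $|R| = k_b$ forces exactly one component per blue spike). Without these, the claim that the selection induces a well-defined, clause-satisfying assignment is unproven. The forward direction as you describe it is fine and matches the paper.
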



The full proof is given in Appendix ~\ref{sec:appendix-analysis}. For the reduction constructed above, the following holds:
given a satisfying assignment for the input formula, we know which components to select: apart from all clutter components,
we have to decide how to cover the blue spikes.
This choice is implied by the assignment and propagated throughout the gadgets.


Given a selection of components, we need to backtrack our choices throughout the wires and other gadgets to determine how the blue spikes are covered. Depending on this choice, we know whether the corresponding variable has to be set to true or to false. Thus we derive our assignment for the 3-SAT formula and complete the proof of NP-hardness.

We can test in polynomial time whether the union of a selection of components covers the parameter spaces. Thus the problem of deciding the $k$-Fréchet distance lies in NP. 

\section{Algorithms}
\label{sec:cover-algo}
In this chapter, we begin by presenting a preprocessing algorithm, which is applicable to the following algorithmic approaches: 
First, we can find a covering selection of at most size $k$ in exponential time (Section~\ref{sec:xp}).
Then, we describe an FPT-algorithm for finding an optimal covering selection in Section~\ref{sec:fpt}.
At last, we compute a selection of at most twice the optimal size, i.e., we approximate $k$ in Section~\ref{sec:approx}.
Note that the XP-algorithm as well as the FPT-algorithm are designed to solve the decision problem, but we could also optimize $k$ by repeating the decision problem solving algorithm for different values of $k$. We could perform a parametric search on the reasonable values for $k$, similar to the algorithm for the Fréchet distance for polygonal curves by Alt and Godau~\cite{altgodau}.

\subsection{Preprocessing}
First, we observe two preprocessing strategies, which can be applied before entering either one of the later discussed algorithms.
In any case we start by computing the free space diagram, which takes quadratic time. In the free space diagram it is easy to identify all \emph{necessary} components: any component that covers an interval of one of the parameter spaces uniquely (i.e., there is no other component covering the exact same interval) is necessarily chosen for an output selection. Such components can be found in $\mathcal{O}(n\log n)$ time using a scan. Furthermore, it is possible to rule out all \emph{redundant} components, which only cover intervals on both parameter spaces that are already covered by (at least) one other component. To be precise a component is called redundant if and only if it is completely contained in the bounding box of a different component (but there could be more than one such component with a sufficiently large bounding box). This case can also be detected via scans.
Thus our preprocessing needs quadratic time. However, it does not improve the size of the input (being the complexity of the free space) nor the resulting runtime of any of the presented algorithms asymptotically.

\subsection{Exponential-time brute force approach}\label{sec:xp}

As a first approach, we present an XP-algorithm.

\begin{remark}
The $k$-Fréchet distance can be decided in $\mathcal{O}(k\cdot n^{2k})$ time for constant $k$.
\end{remark}

The brute force approach simply checks for all selections of $k$ components of the free space whether their joint projections cover both parameter spaces surjectively. That means we have to check at most $\binom{n^2}{k}$ possible combinations of components resulting in a runtime of $\mathcal{O}(k \cdot n^{2k})$ for fixed $k$, which is of course only feasible for very small $k$.
Therefore we can compute the answer to the decision problem for the cover distance with $k=2$ in $\mathcal{O}(n^4)$.
Since $\binom{m}{k} \leq 2^m$ holds for any $m > k$, our runtime is upper-bounded by $\mathcal{O}(n\cdot 2^{n^2})$ for general~$k$.

\subsection{Fixed-parameter tractability}\label{sec:fpt}

Next, we present an algorithm for deciding whether $\delta_{\operatorname{kF}}(P,Q) \leq \eps$ for given $\eps$ and $k$. The runtime of our algorithm is polynomial in the complexity of our curves $P$ and $Q$, but exponential in the two parameters $k$ (the selection size) and $z$ (the neighborhood complexity). 

We define $z$, the \emph{neighborhood complexity} of the curves as the maximum number of segments of one curve that intersect with the $\eps$-neighborhood of any point of the other curve. That is, in the free space diagram we get that each horizontal and each vertical line intersects at most $z$ components.

The idea of the algorithm is the following: we build two directed bounded search trees (as described in Chapter~3 of \cite{boundedsearchtree}) to create selections of components of size at most $k$. Each search tree represents the projection of the free space onto one parameter space, see Figure~\ref{fig:fpt} below.
A node corresponds to a component in the free space (or rather the interval on the respective parameter space it covers) and a path encodes a selection that covers the interval $[ l,r]$ of the respective parameter space. By $l$ we denote the left boundary point of the interval corresponding to the topmost node of the path (e.g.\ the root) and $r$ is the right boundary point of the interval of the bottommost node (e.g.\ a leaf). 
We call a selection or a path \emph{feasible} if the union of the (at most $k$) components it contains/encodes/covers the respective parameter space. From the first tree, $T_P$, we are able to extract all feasible selections which cover the parameter space corresponding to curve $P$, feasible selections of the second tree, $T_Q$, cover the other parameter space.
In the end we want to compare and/or combine a feasible selection of $T_P$ with a feasible selection of $T_Q$ to get a selection $S$ that contains no more than $k$ components, so that its union projects surjectively onto both parameter spaces.

\begin{figure}[ht]
		\centering
		\includegraphics[scale=0.7]{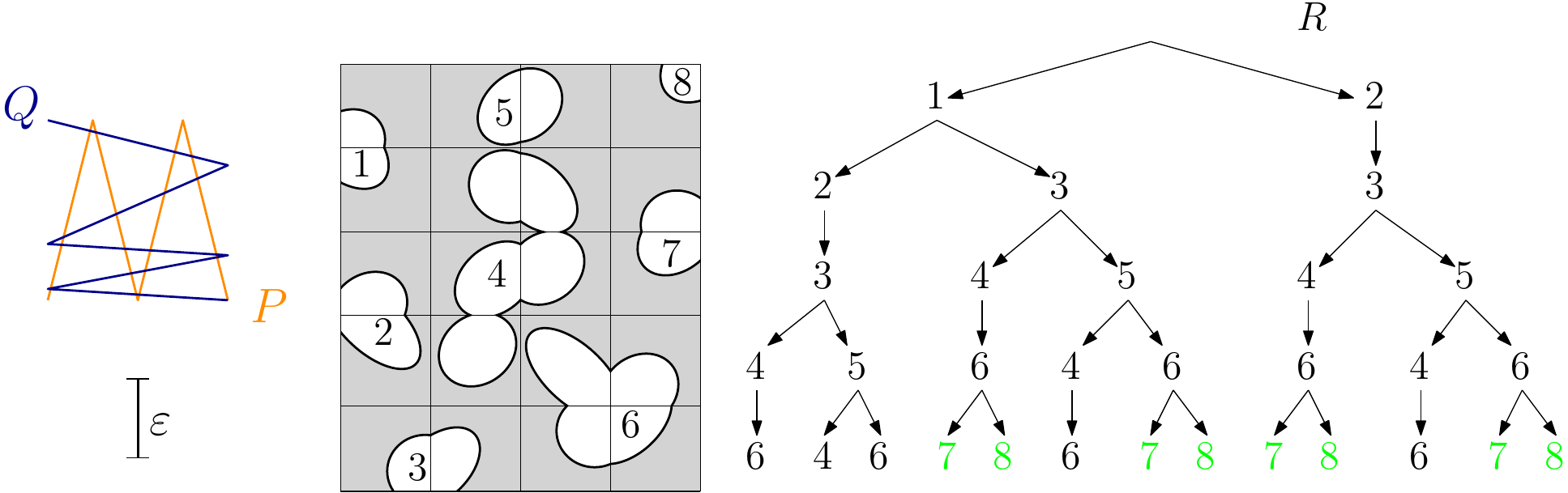}
		\caption{Curves $P,Q$, their free space diagram and the resulting bounded search tree $T_P$. Leaves of feasible paths are marked in green.}\label{fig:fpt}
	\end{figure}

More formally, we build two trees of depth $k$ and branching factor $z$. Consider the tree $T_P$. The root is labelled by the left boundary point of the parameter space of $P$ (we assume w.l.o.g. that the bottom boundary of the free space diagram corresponds to $P$). Now we use a sweep line initialized at the left boundary of the free space diagram. We assign a node in the tree to all components intersecting the sweep line, i.e., the root has as many children as there are components touching the left boundary of the free space diagram. The sweep line moves to the right. Whenever the sweep line is tangent to a component, one of two cases occur: if it touches the leftmost point of a component it becomes \emph{active}, i.e., the sweep line continues to intersect this component when moving further to the right; if the line touches the rightmost point of the component, it becomes \emph{inactive} (so the sweep line just stops to intersect it). In the first case nothing immediate happens to the tree, in the latter case, if the tangent component already has a node in the tree, we insert new nodes: each node corresponding to the tangent component gets assigned as many children as there are other currently active components. By definition a node can never have more than $z$ children. Note that some (small) components may not get assigned any nodes in one tree.
Also, with every node we store its depth (the root has depth 0)  and stop assigning children at depth $k$ - or as soon as a component touches the right boundary of the free space diagram. If a leaf $v_l$ corresponds to a component touching the right boundary, the path from the root to $v_l$ encodes a feasible selection of components for $T_P$. Other selections are called \emph{non-feasible}.
 The second tree $T_Q$ is built analogously by sweeping from bottom to top.

We store the feasible selections obtained from $T_P$ and $T_Q$ in sorted lists $L^T_P$ and $L^T_Q$. For each pair of selections $S_{P,i}$, ${S_{Q,j}}$, where $1\leq i,j \leq z^k$, we test whether $\vert S_{P,i} \cup S_{Q,j}\vert \leq k$ and output this union if the answer is positive.

\begin{theorem}\label{fpt_thm}
The algorithm described above returns a selection $S$ of $k$ components in the free space that surjectively projects onto both parameter spaces if and only if such a selection exists. Therefore it solves the decision problem for the $k$-Fréchet distance in time $\mathcal{O}(nz + kz^{2k})$.
\end{theorem}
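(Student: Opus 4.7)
The plan is to prove correctness in both directions and then bound the runtime. For the easy direction, suppose the algorithm outputs $S = S_{P,i} \cup S_{Q,j}$: by the explicit test $|S| \le k$, and by construction of $T_P$ the path encoding $S_{P,i}$ consists of components where each successor is already active when its predecessor becomes inactive, so the $P$-projections overlap consecutively and the path terminates at a component touching the right boundary of the diagram; hence the $P$-projections of $S_{P,i}$ cover $[0,1]_P$, and symmetrically $S_{Q,j}$ covers $[0,1]_Q$. Thus $S$ is a valid cover.

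For the converse, suppose a selection $S^{\ast}$ of at most $k$ components covers both parameter spaces. I would extract an inclusion-minimal subfamily $S^{\ast}_P \subseteq S^{\ast}$ whose $P$-projections already cover $[0,1]_P$. Minimality forces the $P$-projections of the members of $S^{\ast}_P$ to be pairwise non-nested, so they admit a unique linear order $c_1, \ldots, c_\ell$ (with $\ell \le k$) that agrees with both left and right endpoints; consecutive $c_i, c_{i+1}$ are thus simultaneously active at some moment of the left-to-right sweep, which is precisely the branching condition used to grow $T_P$. It follows that $(c_1, \ldots, c_\ell)$ corresponds to a root-to-leaf path of $T_P$ and thus appears as an entry of $L^T_P$. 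By the same argument $S^{\ast}_Q$ appears in $L^T_Q$, and since $S^{\ast}_P \cup S^{\ast}_Q \subseteq S^{\ast}$ has size at most $k$, the pair passes the size test and its union is output.

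For the runtime, building each tree reduces to a sweep with $\mathcal{O}(n)$ events and at most $z$ simultaneously active components, giving $\mathcal{O}(nz)$ per tree; each tree has depth $\le k$ and branching $\le z$, hence at most $z^k$ feasible paths survive. The combination step iterates over at most $z^k \cdot z^k = z^{2k}$ pairs of feasible selections and spends $\mathcal{O}(k)$ per pair to form the union and test its size, for a total of $\mathcal{O}(k z^{2k})$. Summing yields the claimed $\mathcal{O}(nz + k z^{2k})$.

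The main obstacle is the realization step in the reverse direction: I must argue that \emph{every} valid subcover of $\le k$ components is actually encoded as some branch of $T_P$. The subtle case is when several currently active components could serve as the successor of $c_i$; here the fact that $T_P$ branches on \emph{all} currently active components at each event ensures that every legitimate continuation is reachable through some child, so the inductive realization argument goes through.
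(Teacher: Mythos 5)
Your proof is correct and follows the same overall strategy as the paper (sweep-built bounded search trees for each parameter space, then brute-force combination of feasible root-to-leaf paths), but your correctness argument is substantially more rigorous than what the paper actually provides. The paper dismisses correctness in a single sentence, asserting that the algorithm ``treats all possible selections of size at most $k$ per parameter space''; as you correctly note, this is not literally true, since a valid cover containing a nested or redundant component need not appear as a root-to-leaf path in $T_P$. Your fix --- pass to an inclusion-minimal sub-cover $S^{\ast}_P \subseteq S^{\ast}$, use minimality to deduce pairwise non-nesting of the $P$-projections and hence a unique chain order $c_1,\dots,c_\ell$, then check that the right endpoint of $c_i$ lies in the interval of $c_{i+1}$, so $c_{i+1}$ is among the active components exactly when $c_i$ deactivates and branching occurs --- is precisely the lemma the paper's one-liner elides, and your observation that $|S^{\ast}_P \cup S^{\ast}_Q| \le |S^{\ast}| \le k$ closes the argument cleanly. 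One small inaccuracy in your runtime accounting: you charge $\mathcal{O}(nz)$ for building each tree, but the tree can contain up to $z^k$ nodes and populating them is what the paper charges as $\mathcal{O}(z^k)$; since $z^k = \mathcal{O}(kz^{2k})$ this is absorbed and your final bound $\mathcal{O}(nz + kz^{2k})$ still stands.
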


\begin{proof}
Our algorithm treats all possible selections of size at most $k$ per parameter space and combines all these, hence it necessarily finds a valid solution if and only if one exists.

For the first step, we compute the free space, which takes $\mathcal{O}(n\cdot z)$ time. Building the trees takes $\mathcal{O}(z^ k)$ time since we are limited to depth $k$ and insert at most $z$ children per node. Note that any operation in the free space diagram, such as detecting components that cover a boundary or projecting them onto the boundaries to find the intervals they cover, can be done in $\mathcal{O}(n)$ time.
We obtain at most $z^k$ selections per search tree. Each selection is stored as the sorted set $S_{P,i}$, respectively $S_{Q,i}$, by encoding each component as an integer. We then sort both lists of selections lexicographically. During the sorting we might detect duplicates, which we discard immediately. We then compare each selection of the first list $L^T_P$ to each selection of $L^T_Q$, taking time $k$ per comparison. For any selection smaller than $k$ we can test whether its union with a selection of the other list is still a solution, i.e., whether the unified selection does not have more than $k$ integers. All in all we have a runtime of $\mathcal{O}(z n+z^k+z^k \cdot k\log k+k(z^k)^2)=\mathcal{O}(zn + kz^{2k})$.
\end{proof}

\subsection{Approximation}\label{sec:approx}

Finally, we discuss how to approximate the size $k$ of an optimal solution.
The main idea of our algorithm is to find minimal covering selections for each parameter space individually and combine those selections into an overall solution in the end. We can find both selections covering only a single parameter space by applying a greedy technique.

Given the free space diagram, we first project all components onto the parameter spaces. We get two sets of intervals, one covering the first parameter space (we store these intervalls in the list $L_P$, see Figure~\ref{fig:approx}) and one for the second parameter space (stored in $L_Q$). So one component projects onto two intervals, one on each parameter space (and therefore one per list). We store the information on which two intervals stem from the same component accordingly.

\begin{figure}[ht]
		\centering
		\includegraphics[scale=0.7]{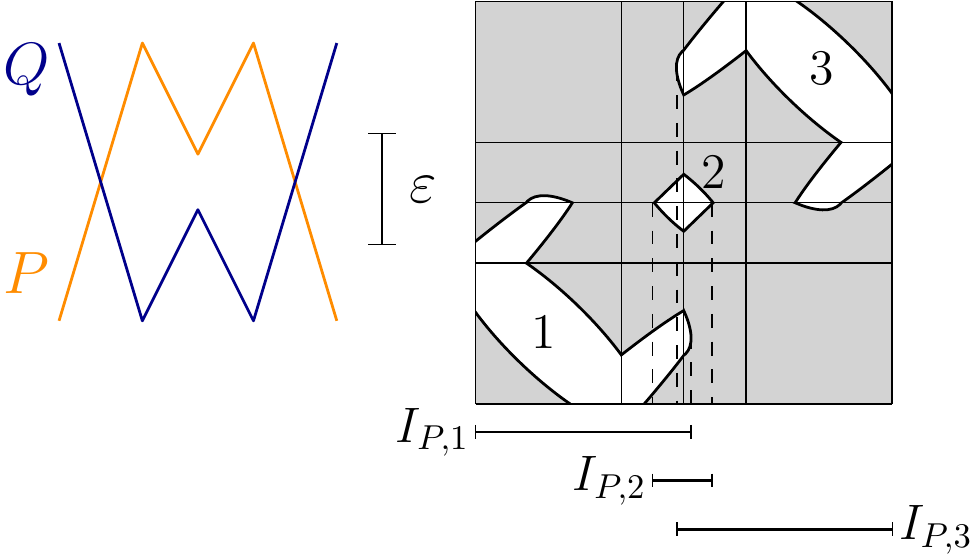}
		\caption{The projection onto the first parameter space and the resulting elements of $L_P$.}\label{fig:approx}
	\end{figure}

Now we simply want to select a minimum number of these intervals whose union equals the unit interval, i.e., the parameter space. We deal with each parameter space on its own in the following way: we sort the lists $L_P$ (and $L_Q$) by left endpoint.
Now, per list, starting at 0, we make a greedy choice and select the interval (among the intervals starting at 0) with the rightmost endpoint, say $r_1$. 
Here we recurse, i.e., we take $r_1$ as new start point and again search among the intervals covering $r_1$ (i.e., intervals starting at or to the left of $r_1$) for the one with the rightmost endpoint. 
As soon as we select an interval with 1 as endpoint we have found a minimal selection covering the respective parameter space. 
To see that our greedy strategy is optimal, observe that the algorithm proceeds from left to right maintaining the following invariant: 
at any time we selected a minimum number of intervals to cover the parameter space from the left boundary to the current position.

As output we have two selections of intervals, $S_P$ and $S_Q$. The intervals correspond to components. We build the union of both lists, taking into account that an interval in $S_P$ may belong to the same component as an interval of $S_Q$, and output the selection of components $S$ that contributed at least one of the chosen intervals.

The worst case that might occur is the following: all of the intervals we selected during the greedy procedures correspond to different components in the free space, so that the union of our selections is of size $\vert S_P\vert+\vert S_Q\vert$. A different selection of size $\vert \hat{S}\vert =\max(\vert S_P\vert ,\vert S_Q\vert)$ might cover both parameter spaces but is not detected by the greedy scan. This happens due to the fact that the greedy selection only detects one optimal solution for either parameter space, not all of them, and does not take into account that intervals possibly correspond to the same component.

Finally, we consider the runtime: computing the free space takes quadratic time. Sorting the lists adds another logarithmic factor while the greedy selection routine takes linear time in the number of intervals.
Hence we get an overall runtime of $\mathcal{O}(n^2\log n)$.

\begin{theorem}
The algorithm described above runs in $\mathcal{O}(n^2\log n)$ time and finds a selection of components that covers both parameter spaces if and only if one exists. A found selection contains at worst twice the minimum number of components needed.
\end{theorem}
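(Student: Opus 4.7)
The plan is to split the statement into three claims and tackle them in turn: (a) whenever a covering selection of components exists, the algorithm outputs one; (b) any output selection has size at most $2 \cdot \mathrm{OPT}$, where $\mathrm{OPT}$ is the minimum number of components that cover both parameter spaces; and (c) the runtime is $\mathcal{O}(n^2 \log n)$. The core tool I would invoke is the classical fact that the left-to-right greedy algorithm for covering an interval by a given family of subintervals returns a minimum-cardinality cover whenever any cover exists.

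For (a), I would argue in both directions. If the algorithm returns $S$, then by construction $S_P$ covers the first parameter space and $S_Q$ covers the second, so the associated components in $S$ together cover both. Conversely, suppose some selection $S^*$ covers both parameter spaces. Projecting $S^*$ onto the first parameter space yields a valid interval cover of $[0,1]_P$, so the greedy routine run on $L_P$ succeeds and produces some $S_P$ (and likewise for $L_Q$). This gives (a) and simultaneously sets up (b): applying the optimality of greedy interval cover to the projections of $S^*$, we immediately obtain $|S_P| \le |S^*| = \mathrm{OPT}$ and $|S_Q| \le \mathrm{OPT}$. Since the output $S$ is the union of the components underlying $S_P$ and $S_Q$ (with components common to both lists identified), we conclude $|S| \le |S_P| + |S_Q| \le 2 \cdot \mathrm{OPT}$.

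For (c), computing the free space diagram takes $\mathcal{O}(n^2)$, the two lists $L_P, L_Q$ contain $\mathcal{O}(n^2)$ intervals in total, sorting them by left endpoint costs $\mathcal{O}(n^2 \log n)$, and a single linear sweep implements the greedy choice on each list. Combining $S_P$ and $S_Q$ into $S$ (with deduplication by component identifier) is dominated by the sorting step. Summing gives $\mathcal{O}(n^2 \log n)$ as claimed.

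The step I expect to need the most care is the optimality-of-greedy input to (b): I must be explicit that the projections of the components in any valid component-cover form a valid interval cover, so that the one-dimensional optimality bound transfers. A subtle but harmless point is that two different components may project to identical intervals, and that an interval selected by the greedy routine on $L_P$ may correspond to the same component as one selected on $L_Q$; the bound $|S| \le |S_P| + |S_Q|$ still holds because merging by component identity can only decrease $|S|$, so it stays at $\le 2 \cdot \mathrm{OPT}$, and the tightness example described in the text (disjoint projections on the two sides) confirms that the factor of $2$ cannot in general be improved by this strategy.
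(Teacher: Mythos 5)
Your proposal is correct and follows the same approach the paper's discussion takes: compute the free space, project components onto each parameter space, apply the classical greedy interval-cover to each side, and bound the union by $|S_P|+|S_Q|\le 2\cdot\mathrm{OPT}$. You make explicit the key inequality $|S_P|,|S_Q|\le\mathrm{OPT}$ (via projecting an optimal component selection onto each parameter space), which the paper leaves implicit in its ``worst case'' discussion, so your write-up is a slightly more rigorous rendering of the same argument.
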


We conjecture that the approximation factor $2$ is probably not tight: one can find a free space diagram for which the algorithm's output is twice the size of the optimal solution, but it seems hard to construct curves that realize such a diagram.
Showing that it is or that a better approximation factor holds remains open.

\section{Conclusion}
\label{sec:cover-conclusion}
We presented a novel variant of the Fréchet distance that enables us to detect similarities between objects that resemble each other only piecewise. We ask for a number $k$ of (possibly overlapping) subcurves of our input curves, such that we can find pairs of subcurves that have small weak Fréchet distance. Therefore the $k$-Fréchet distance provides a transition between weak Fréchet ($k=1$) and Hausdorff distance ($k$ sufficiently large, e.g., $k = n^2$).

Unfortunately, deciding the $k$-Fréchet distance of two polygonal curves proved to be NP-hard (naturally, minimizing the number of subcurves  is as well). However, we were able to tackle the computational challenge from different angles: we gave an XP-algorithm depending on $k$, presented an FPT-algorithm using two parameters
and even found a polynomial time algorithm that gives a 2-approximation on the number of subcurves.
For the approximation algorithm, it remains open to prove tightness of our approximation factor. 

As mentioned in the introduction, there is also a second variant of defining the $k$-Fréchet distance we call the ``cut version'': instead of allowing overlapping subcurves, we cut the curves into several pieces and search for a matching between the pieces of the two curves. The hardness proof of Buchin and Ryvkin~\cite{eurocg} holds for this variant, i.e., it is also NP-hard. However, the algorithmic approaches only work for the variant we discuss in this paper (we call it the ``cover variant''). Finding algorithmic approaches and examining hardness for the cut version of the $k$-Fréchet distance is work in progress. We conjecture that the decision problem for the cut variant is $\exists\mathbb{R}$-complete.

\bibliography{kFrechetArXiv}

\newpage
\appendix
\label{sec:appendix}
\section{The box problem} \label{sec:appendix-box}

Here we give the detailed construction of our box problem instance derived from any 3-SAT formula:
Let $V=\{v_1, \dots, v_n\}$ be the set of variables and let $C=\{c_1,\dots,c_m\}$ be the set of clauses.
For each variable $v_i$, let $a^+_i$ (respectively $a^-_i$) be the number of clauses in which $v_i$ appears positive (respectively negated) and let $ \{c^+_{i, 1}, c^+_{i, 2}, \dots, c^+_{i, a^+_i}\}$ (respectively $\{c^-_{i, 1}, \dots, c^-_{i, a^-_i}\}$) be the set of clauses in which $v_i$ appears positive (respectively negated).
Additionally we define the sums $s^+_i=\sum_{k=1}^i a^+_k$ and $s^-_i=\sum_{k=1}^i  a^-_k$. 

	\begin{figure}[ht]
		\centering
		\includegraphics[scale=0.6]{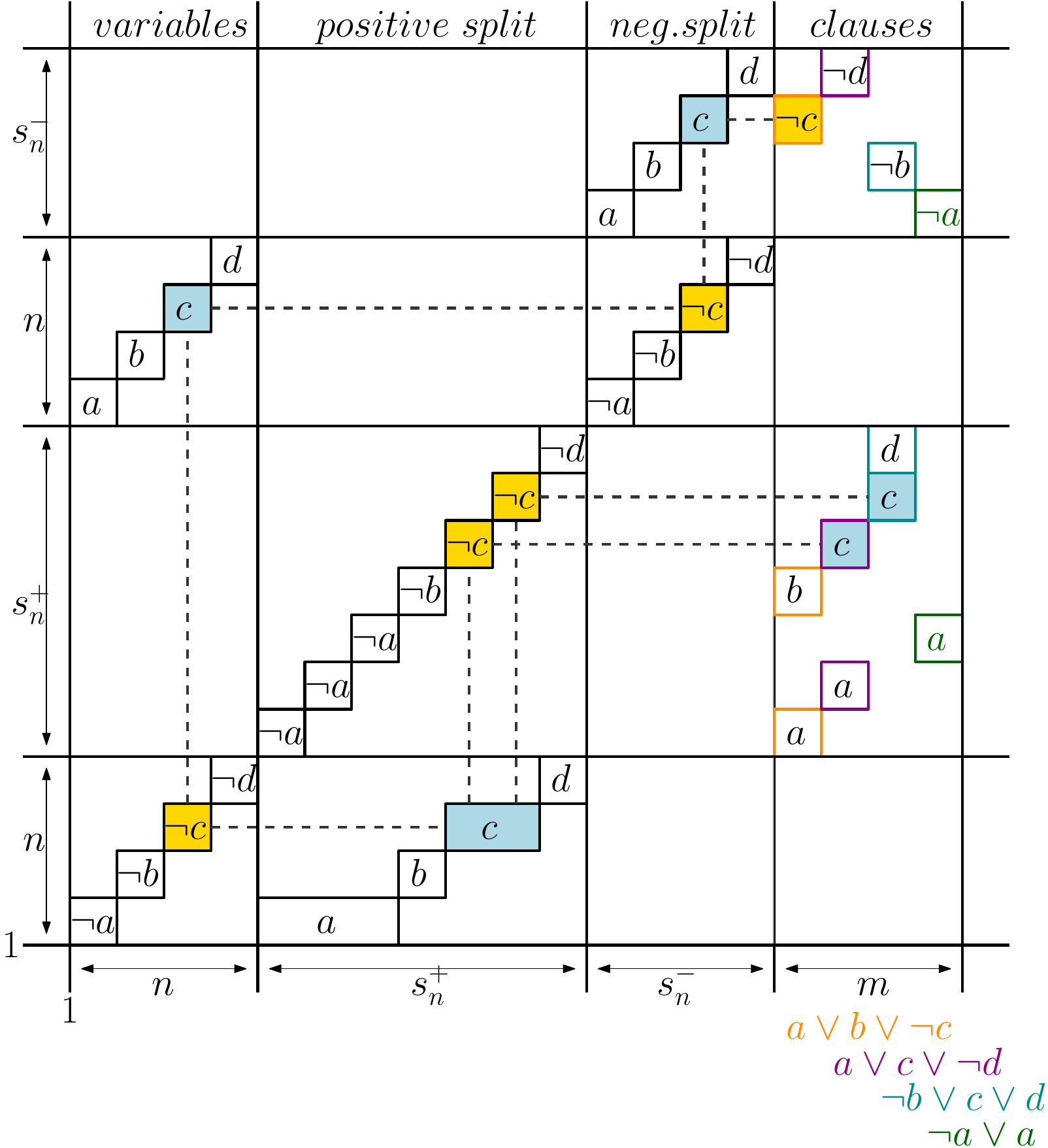}
		\caption{Construction of the box problem instance and propagation of assignment.}\label{fig:np-boxes-appendix}
	\end{figure}

In the following we describe the placement of boxes as depicted in Figure~\ref{fig:np-boxes-appendix}. The number of rows and columns needed for the different gadgets is indicated in the figure. A box $(x, y, w, \ell)$ designates the axis-aligned rectangle with unit height and width $w$ whose bottom left corner has coordinates $(x, y)\in \R^2$ with label $\ell$. The labels are later used in the proof of correctness.

\textbf{Variable gadget.}
For each variable $v_i$, we place two boxes $(i, i, 1, \neg v_i)$ and $(i, i + n + s^+_n, 1, v_i)$, and no other boxes are placed over the interval $(i, i+1)$ of the bottom boundary. That way, in order to cover said interval, at least one of those two boxes has to be chosen.

\textbf{Split gadget.}
We build the splits used for the positive occurrences of the variables first.
For each variable $v_i$, we place the box $(1+n+s^+_{i-1}, i, a^+_i, v_i)$ and the boxes $(n+s^+_{i-1}+j, n+s^+_{i-1}+j, 1, \neg v_i)$, for $j\in \{1, \dots, a^+_i\}$.
For negated occurences of $v_i\in V$ we place the box $(1+n+s^+_n+s^-_{i-1}, n+s^+_n+i, a^-_i, \neg v_i)$ and the boxes $(n+s^+_n+s^-_{i-1}+j, 2n+s^+_n+s^-_{i-1}+j, 1, v_i)$, for $j\in \{1, \dots, a^-_i\}$.

\textbf{Clause gadget.}
We assign to each clause $c_i$ the unit interval on the bottom boundary of $B$ starting at $I(c_i) = n+s^+_n+s^-_n+i$.
For each literal of a clause $c_i$ we place a box labelled with the respective literal above the unit interval $[I(c_i),I(c_i)+1]$.
To be precise, for each $v_i\in V$ we place the boxes $(I(c_h), n+s^+_{i-1}+j, 1, v_i)$, for $j\in \{1, \dots, a^+_i\}$ and $h \in \{1, \dots m\}$ where $c_h=c^+_{i,j}$, and $(I(c_h), 2n+s^+_n+s^-_{i-1}+j, 1, \neg v_i)$, for $j\in \{1, \dots, a^-_i\}$,  $h \in \{1, \dots m\}$ where this time $c_h=c^-_{i,j}$.

Overall this means that we have $4n + 2(m_1+2m_2+3m_3)$ boxes, where $m_i$ is the number of clauses with $i$ variables (and therefore $m_1+m_2+m_3=m$).
Each unit interval $(i, i+1)$ with $i\in \{1, \dots, 2n+s^+_n+s^-_n\}$ on the left boundary of $B$ can be covered by exactly two different boxes. The same holds for every unit interval $(i, i+1)$ with $i\in \{1, \dots, n+s^+_n+s^-_n\}$ on the bottom boundary.
Note that for all these unit intervals, one of the boxes is labeled with a variable and the other one is labeled with the negated version of that variable, i.e., one box is labeled $v$ and the other one $\neg v$.
Each Interval $I(c)$ on the bottom boundary can be covered by as many boxes as the clause $c$ contains literals. The labels of these boxes correspond to the variables contained within this clause.
We set the bounding box $B$ as the axis-aligned rectangle spanned by the points $(1,1)$ and $(1+n+s^+_n+s^-_n+m, 1+2n+s^+_n+s^-_n)$ and we set $k=2n + m_1+2m_2+3m_3$ so only half the boxes can  be chosen.
For a given boolean formula, the set of boxes defined above can be determined in polynomial time.

Next we prove Theorem~\ref{thm:box}, i.e., the NP-completeness of the box problem.

\begin{proof}
First we prove that the box problem as constructed above has a solution if and only if the input 3-SAT formula has a variable assignment such that it evaluates to true.

\textbf{'$\Leftarrow$'}
Let $f:V\rightarrow \{true, false\}$ be an assignment of the variables that satisfies the 3-SAT formula.
We set $S= \{\text{boxes }(x, y, w, v) \mid f(v)=true\}\cup \{\text{boxes }(x, y, w, \neg v) \mid f(v)=false\}$.
The set $S$ projects surjectively onto the bottom and left boundary of the bounding box $B$ because each unit interval on the left boundary is covered by exactly one box. For most of the bottom boundary we also have that each interval is uniquely covered, but for the clauses columns we allow that more than one box per unit interval is chosen (i.e., more than one corresponding literal is set to true).

\textbf{'$\Rightarrow$'}
Let $S$ be a minimal set of boxes that covers the boundaries of the bounding box $B$ with $|S|=k$.
This means that each unit interval on the left boundary of $B$ is covered by exactly one box. Due to the position of the boxes, this means that for each variable $v$ either all boxes labeled $v$ or all boxes labeled $\neg v$ have been chosen. This induces an assignment of the variable $v$, i.e., $v$ is set to true if the boxes labeled $v$ have been chosen and else $v$ is set to false.
Note that the selection $S$ covers the box $B$. Therefore, for each clause $c$ one of the boxes that can cover $I(c)$ is an element of the selection $S$. It follows that the assignment of variables induced by $S$ fulfills the formula.


Above we showed the NP-hardness of the box problem. The box problem is NP-complete since for a given subset $S$ of boxes one can test if the bounding box $B$ is covered by simply marking the covered intervals, which can be done in polynomial time.

\end{proof}

\section{Hardness of the k-Fréchet problem}\label{sec:appendix-hardness}

In this section we give the detailed description of the gadgets needed for the reduction from rectilinear monotone planar 3-SAT to the $k$-Fréchet distance problem and present the formal proof of NP-hardness.

\subsection{Gadgets}\label{sec:appendix-gadgets}

In this subsection we first describe the basic gadgets we need for our reduction. The curves are intricate, which is why we also need some other gadgets to be able to draw them correctly. In Subsection~\ref{sec:appendix-building} we describe how to combine the gadgets to construct the actual curves $P$ and $Q$ corresponding to the input instance. 

\subsubsection{Main gadgets}

Assigning a variable to be true or false corresponds to choosing between one of two options in the free space diagram, just as choosing between two boxes per row for the box problem. Therefore we do not build specific variable gadgets but encode the variables in the wires, which work as ``edges'', i.e., connections between other gadgets. Wires should produce similar ``staircase'' structures as in Figure~\ref{fig:np-boxes} (or Figure~\ref{fig:np-boxes-appendix}) in order to provide a choice between two components per uncovered interval of either parameter space.

	\begin{figure}[ht]
		\centering
		\includegraphics[scale=0.8]{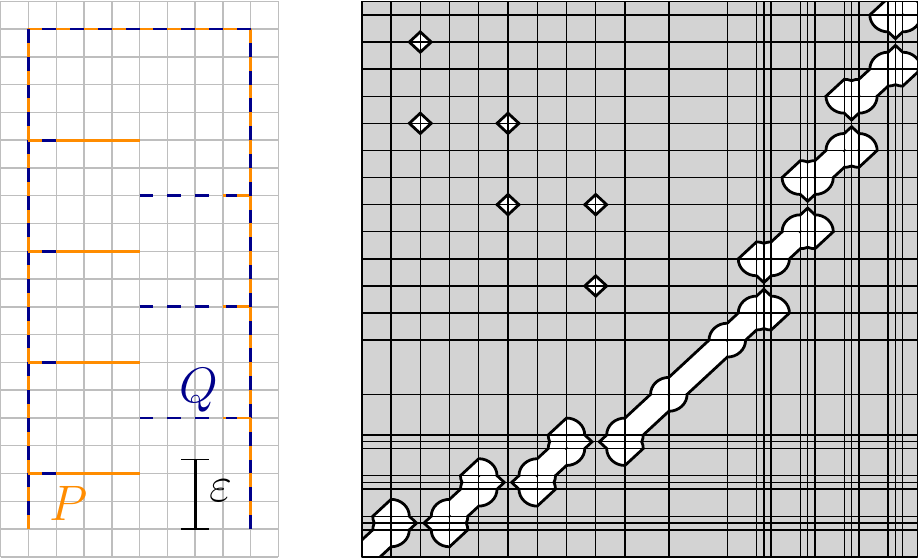}
		\caption{Wire gadget drawn on a (coarse) grid with its resulting free space diagram.}\label{fig:wire-appendix}
	\end{figure}

\textbf{Wire gadget.} Figure~\ref{fig:wire-appendix} displays a wire gadget: we build two \emph{base curves} that are intersected with \emph{spikes} of different lengths: each curve consists of two parallel base curves (in this figure they are drawn vertically) with shorter spikes on one and longer spikes on the other side (horizontal segments that are pointing inwards). Short spikes of one curve lie on top of the longer spikes of the other one. In the following the term ``spike'' refers to long spikes. We also show an underlying grid on which we draw our gadget, but note that we only drew every fourth line to enhance readability. The distance between two spike tips of the same color is therefore 16 times the unit distance.
Two tips of spikes that are within $\eps$-distance (i.e., within a distance of 10 times the unit length) are called \emph{adjacent}.
The spikes generate the small components forming a \emph{staircase} while the segments on the base curves generate components that we definitely have to take. These larger components (that are the only ones covering a certain interval on either parameter space) are called \emph{necessary clutter}, or, for short, clutter, throughout the rest of this paper.
Sometimes smaller components are generated that are covering subsets of the intervals covered by clutter components, so called \emph{unnecessary} components. They are never chosen for a minimal covering selection. 

Between the clutter components there are gaps, namely the intervals induced by the spikes, that can be covered by one of two components induced by the two adjacent spikes of the the opposite color.
Remember that solving an instance of the $k$-Fréchet distance problem means selecting a set of components from the free space diagram, which covers both parameter spaces. When we choose a component that covers the interval induced by a spike we say that the spike is \emph{covered} by this component, respectively by the adjacent spike inducing said component. 

At the end of the construction the value $k$ is chosen such that each blue spike (curve $Q$) can only be covered by one single yellow spike (curve $P$).
The spikes of $Q$ determine the variable assignment: we define a \emph{central spike} in each wire that encodes a variable.
As evident from Figure~\ref{fig:input}, we have one vertical edge per variable of the input instance, which connects the positive clause (or clauses) the variable takes part in to the negative one(s). As we construct the wires to follow the edges for each variable $v$, we define the central spike of $v$ to be a blue spike in this vertical part.
If the central spike is covered by its upper yellow neighbor, the variable corresponding to this wire is set to true; if the central spike is covered by its lower neighbor, the variable is set to false.
The choice made for the central spike is propagated throughout the rest of its wire.
The fact that blue spikes have two choices, and that this choice can be propagated if each blue spike is uniquely covered, also holds for the other gadgets.

More formally, the property needed in the proof of correctness is that if a blue spike is set to the true state, i.e., it is covered by the adjacent yellow spike above it, then every blue spike below it is also covered by the adjacent yellow spike above it. Symmetrically, we need that if a blue spike is covered by the yellow spike below it, each blue spike above it is covered by the yellow spike below it, too.

Note that in Figure~\ref{fig:wire-appendix} as well as in the figures for the other gadets we connected the base curves to give a small example consisting of connected curves, but these connections are not part of the respective gadget itself. 
The overall construction of $P$ and $Q$ is presented in Section~\ref{sec:appendix-building}.

	\begin{figure}[ht]
		\centering
		\includegraphics[scale=0.8]{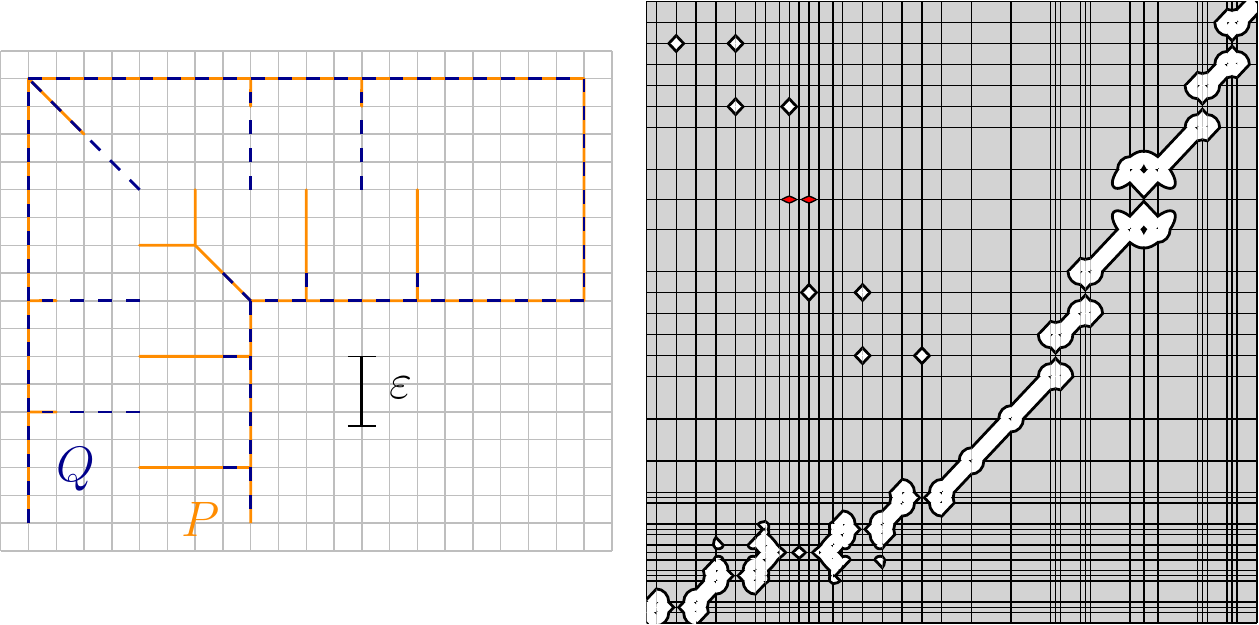}
		\caption{Bend gadget drawn on a (coarse) grid with its resulting free space diagram.}\label{fig:bend}
	\end{figure}

\textbf{Bend gadget.} Next we want to build 90 degree turns in order to ensure that we can form curves to represent the rectilinear edges of the input graph. We simply bend a wire to a right angle and insert a long diagonal spike at the outer corner as well as a ``fork'' with two spikes at the inner one, as displayed in Figure~\ref{fig:bend}. We only need to build right angles since the input instance is, as mentioned, rectilinear.

The free space shows the two staircases connected by a row with two parallel and slightly smaller components to choose from. These are generated by the two spikes of the fork (the $Y$-shaped part of the curve at the inner corner of the gadget).
The bend gadget ensures that the choice made for the wire on one side (either selecting the right or the left of the two components of each row, i.e., covering a blue spike with its upper/right or lower/left partner) propagates to the wire on the other side. The spikes in a bend have the same properties as the spikes in the wire gadget.

	\begin{figure}[ht]
		\centering
		\includegraphics[scale=0.8]{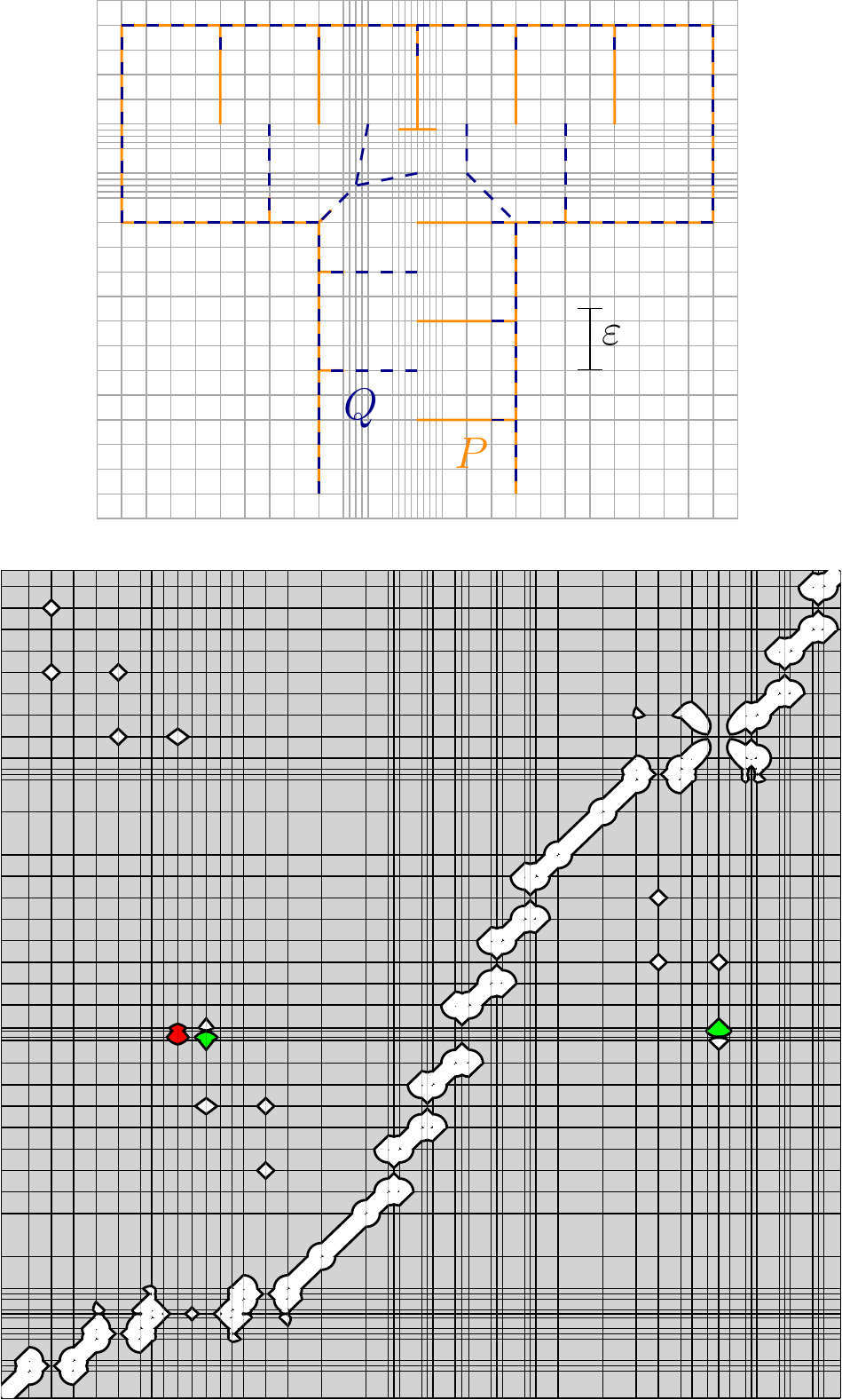}
		\caption{Split gadget drawn on a grid with its resulting free space diagram.}\label{fig:split}
	\end{figure}

\textbf{Split gadget.} Now we focus on how to split a wire into two. In Figure~\ref{fig:split} the gadget is displayed, as well as its free space diagram. Note that the split is probably the most intricate gadget to draw, which is why we drew the underlying grid in its actual density for specific parts of the gadget (and used the coarser grid where only every fourth line is drawn for the rest of the gadget).
The $T$-shaped yellow spike in the middle of the split is called the \emph{split spike}.
We can identify three staircases as we have three wires connecting in the gadget. The top left staircase corresponds to the bottom wire, the right staircase corresponds to the right wire and the bottom left staircase stems from the left wire. Note that splits are directed in the sense that we can only enter a gadget through one specific wire, the \emph{entry wire}. In case of Figure~\ref{fig:split} the entry wire is the bottom one, while the other two wires are called \emph{exit wires} of the split. We discuss how to realize splits that allow exit wires facing straight instead of sideways in Section~\ref{sec:appendix-building}.

In the free space diagram, the split itself consists of the small, colored 
components in the rows just between the second and third staircases: we have two components at its rightmost column and three more to their left in the same row: a larger, red connected component as well as another two small ones. Note that the tiny components next to the green ones are never chosen; they are, among few others in this figure, avoidable components. The red component covers the same interval on the horizontal axis as the ending of the top staircase, the smaller ones are part of the other two staircases. In order to cover the split row (i.e., in order to cover the $T$-shaped spike of curve $P$) we either have to choose the red component, which determines all choices for the three staircases, or select two of the smaller ones, namely those marked green.

Formally this means the following: The split spike is adjacent to three blue spikes. Its $T$-shaped tip can be (fully) covered by either the top blue spike of the entry wire (selecting the red component in the free space), or by both blue spikes of the exit wires (selecting both green components). It cannot be fully covered by a single blue spike from an exit wire.
This implies that we can propagate choices.
Similar to what happens in wires, if we require that each blue spike is uniquely covered, we obtain the following:
If a blue spike in one of the exit wires is covered by the adjacent spike that is further away from the split spike, each blue spike in the entry wire is covered by the yellow spike above it. If a blue spike in the entry wire is covered by the adjacent yellow spike below it, each blue spike in the exit wire is covered by the adjacent spike that is closest to the split spike. Otherwise not all yellow spikes would be covered.


	\begin{figure}[ht]
		\centering
		\includegraphics[scale=0.8]{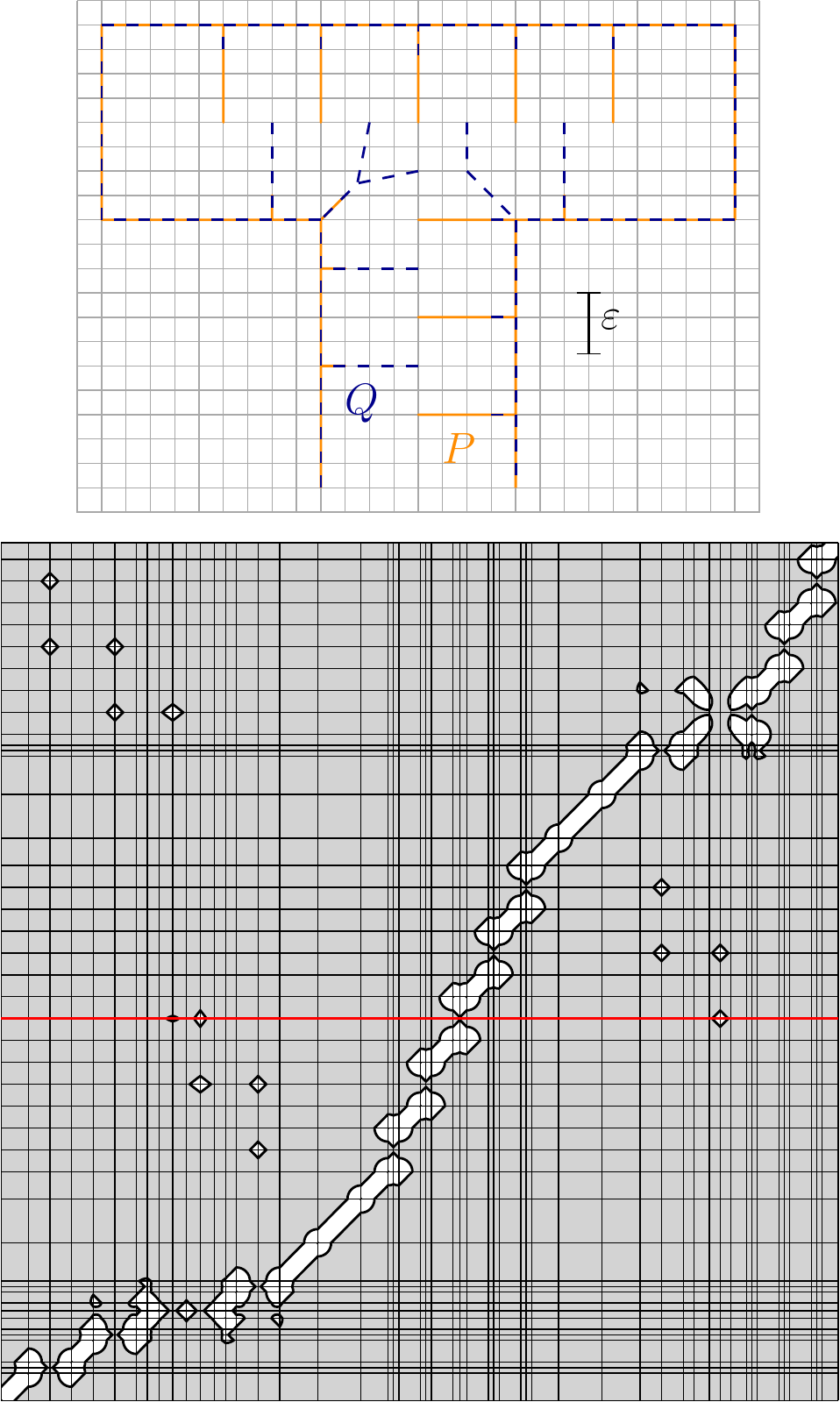}
		\caption{Clause gadget drawn on a (coarse) grid with its corresponding free space diagram. The horizontal red line that corresponds to the tip of the clause spike intersects three components.}\label{fig:clause}
	\end{figure}

\textbf{Clause gadget.}

The clause gadget looks similar to the split gadget, the $T$-shaped spike is simply replaced by a regular spike, which we call \emph{clause spike}. Note again, that we use a coarse grid to indicate the placement of the vertices; for the $Y$-shaped blue spikes we refer to Figure~\ref{fig:split} where (around the intricate parts of the gadget) all grid lines are drawn.

The three staircases in the free space diagram of Figure~\ref{fig:clause} correspond to the same wire parts of the curves as for the split gadget. As a result of omitting the $T$-shape we get one interval, corresponding to the clause spike, that can be covered by three components. We marked the center of the interval in red in Figure~\ref{fig:clause}. Now, in order to fulfill this clause and cover both parameter spaces we need to select at least one of the three components to cover this interval.

When looking at the curves this means that in at least one of the three wires the blue spikes all have to be covered by the adjacent yellow spike that is closest to the clause spike.

Still we need to make sure we can connect all described gadgets to curves representing any input instance, i.e., we have to connect them according to a given a rectilinear embedding of monotone planar 3-SAT formula. 
The next section contains the descriptions of further gadgets we need, while in Section~\ref{sec:appendix-building} we explain how to place the gadgets, connect them consistently and finally traverse the curves.

\subsubsection{Construction gadgets}

\textbf{Color gadget.} First of all, we present a gadget to switch colors of spikes, i.e., a gadget that allows us to change whether the longer spikes of $P$ occur on the outer or inner string of a wire (of course, this also determines the side on which the longer spikes of $Q$ are located). It also changes whether we select the upper or the lower staircase components in the free space diagram.
In entering a gadget we might require a certain state of the spikes, while for the next gadget we need the other one, so changing the states may be necessary.

	\begin{figure}[ht]
		\centering
		\includegraphics[scale=0.9]{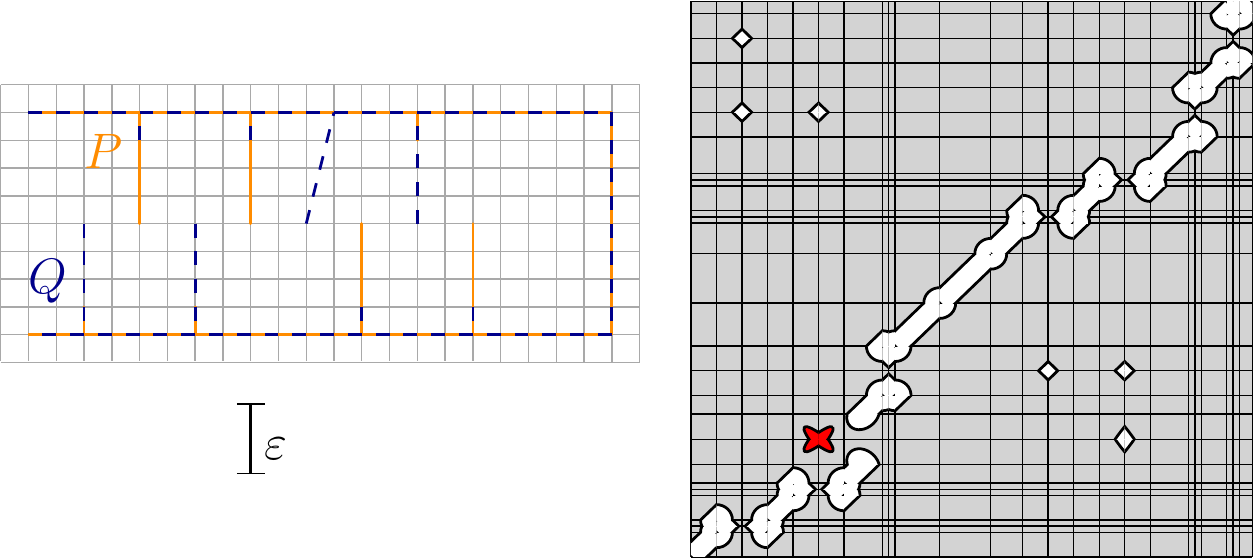}
		\caption{Color gadget drawn on a (coarse) grid with its resulting free space diagram.}\label{fig:col}
	\end{figure}

As displayed in Figure~\ref{fig:col}, the red, $x$-shaped component, which corresponds to the slightly tilted spike being within $\eps$-distance of the upper and the lower long yellow spike, connects the two staircases that stem from the wires to the sides of it.
So, concerning coverage the tilted spike has the same properties as a normal spike.

	\begin{figure}[htbp]
		\centering
		\includegraphics[scale=0.8]{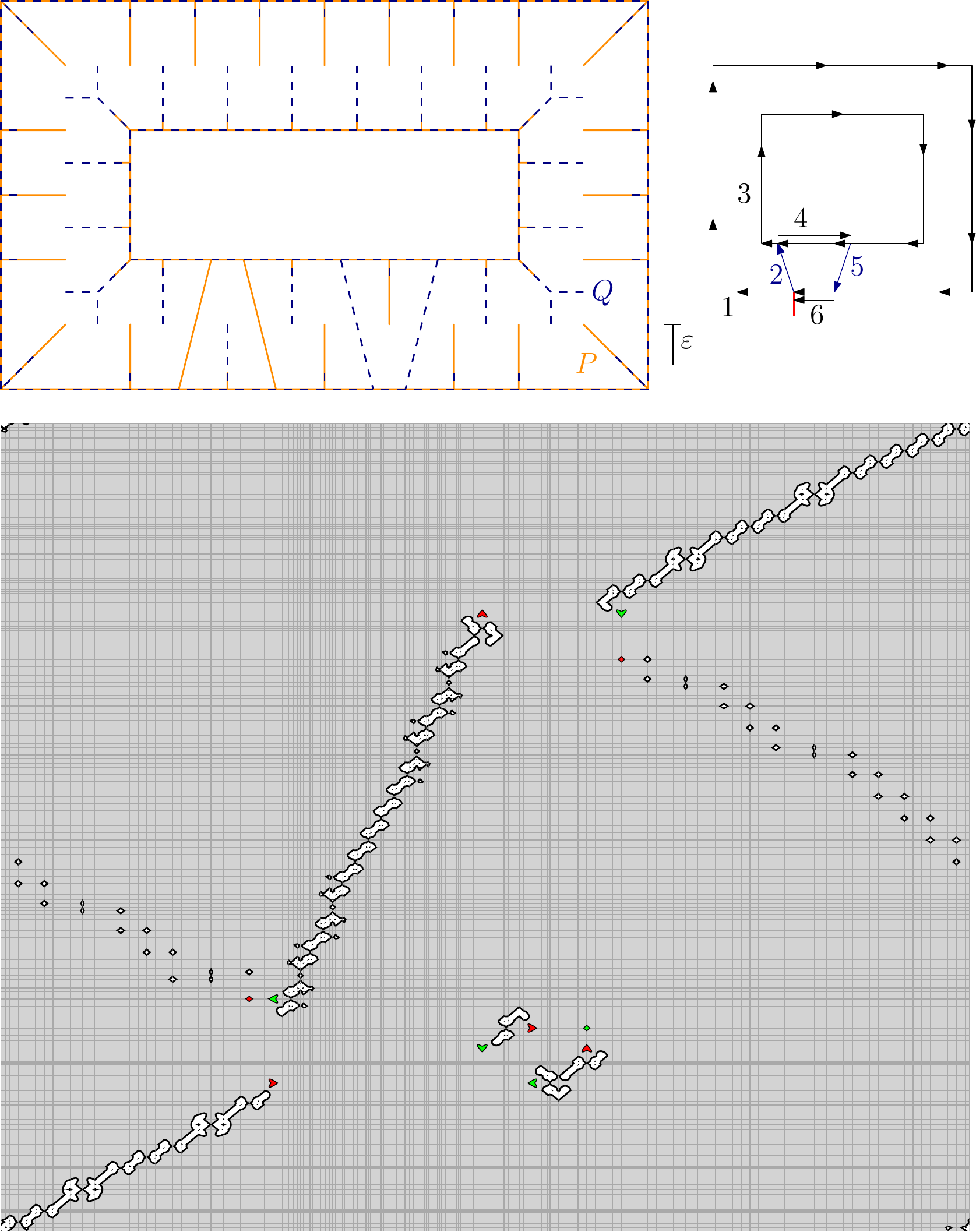}
		\caption{Connection gadget with its resulting free space diagram and traversal order of $P$.}\label{fig:connection}
	\end{figure}

\textbf{Connection gadget.} Now, in order to build curves consisting of two parallel base curves (intersected with spikes, of course) per edge segment of the input graph, we need a gadget which we call connection gadget.
Figure~\ref{fig:connection} shows how the connection gadget works: we build a face of a graph using wires and four bends (of course, this is an artificial face which would not occur in any actual input instance, but it serves its purpose here). To connect the inner and the outer base curves, respectively, we add two tilted segments between the strings.

Figure~\ref{fig:connection} displays the gadget. Here, we omit the underlying grid completely, because it would be rather confusing than helpful. A traversal of the curve $Q$ is drawn next to the curves; the red line indicates beginning and end point of $Q$. Note that the connection of $Q$ is tilted ``downwards'' whereas the other one is tilted ``upwards'', and that each of the tilted segments is only traversed once. The wires and bends produce nice staircases, which can be seen in the resulting free space diagram. Between the two staircases, we have the connection gadget itself. 
Note that the middle of each tilted segment has the same properties as the tip of a spike when it comes to finding a minimal covering selection: each middle point of a tilted segment can only be covered by the adjacent spikes. The middle points are placed on grid points where regular spike tips of a wire would occur.

\textbf{Scissor gadget.} Up to now we have only ensured that we can work with closed curves for our reduction since any distinct beginning or ending of the curves breaks the staircase structures we need. However, it is possible to also work with non-closed curves. The necessary construction is shown in Figure~\ref{fig:scissor}.

This gadget cuts a closed curve open without breaking the staircase structures of the wire where it is located. Both curves start at the upper tip of the $V$-extension on the left hand side of the wire and end at the lower tip of the $V$. Beginning and end just add one larger and one tiny avoidable clutter component each, located in the top left and bottom right corner of the free space diagram. Neither of the four arising components are chosen for a covering selection since the clutter of the wire gadgets already covers the corresponding intervals.

	\begin{figure}[ht]
		\centering
		\includegraphics[scale=0.9]{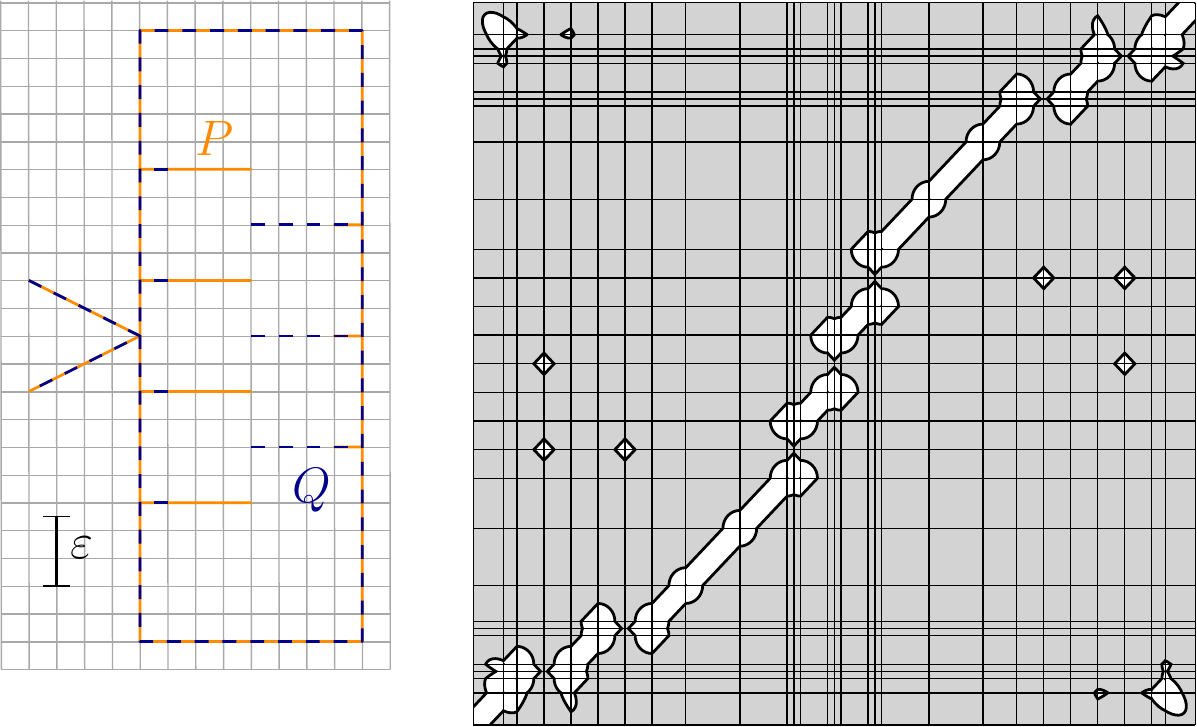}
		\caption{Scissor gadget drawn on a (coarse) grid with its resulting free space diagram.}\label{fig:scissor}
	\end{figure}

\subsection{Building the curves}\label{sec:appendix-building}

Our input instance is the drawing of a given 3-SAT instance, i.e., a rectilinear graph where each vertex has at most degree three. We first need to perform some transformations and scaling to make sure that we can draw curves along the edges and vertices of the graph.

\begin{lemma}
We can place curves consisting of the described gadgets along the edges and vertices of the input graph $G$.
\end{lemma}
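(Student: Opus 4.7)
The plan is to realize the construction by first scaling the grid embedding of $G$ so that every gadget fits at the appropriate vertex or along the appropriate edge, then inserting gadgets, resolving orientation issues, and finally fixing a traversal order that turns the two base curves into two polygonal chains $P$ and $Q$.

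First I would start from the given grid embedding of the rectilinear monotone planar 3-SAT instance and uniformly scale it by the factor $2^{10}$ mentioned in Section~\ref{sec:cover-hard-construction}. Since every gadget described in Subsection~\ref{sec:appendix-gadgets} occupies a bounded region on the unit grid (each of them is drawn on a grid whose coarsest version is a factor $4$ coarser than the underlying unit grid), this scaling guarantees that each vertex of $G$ has enough room to host its assigned gadget (variable, bend, split, or clause) together with two parallel base curves, and that each edge has enough room to host a chain of wire gadgets together with at most one color gadget. Because $G$ is rectilinear and monotone, variables sit on a horizontal line, positive clauses above, negative clauses below; thus at each variable vertex only two wires enter vertically (plus a possible horizontal one) and the required split directions are dictated by the embedding.

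Next I would install the gadgets. Along every edge I place a wire gadget long enough to cover the edge; whenever the required color of the longer spikes at the two endpoints of an edge disagrees, a color gadget is inserted to switch the pattern. At every degree-$2$ corner of $G$, a bend gadget is placed, and at every degree-$3$ variable vertex a split gadget is placed. The main obstacle is that the split gadget is directed, with a designated entry wire and two exit wires at right angles; the embedding may require a split whose entry or exits point in other directions. I would resolve this by prepending or appending a constant number of bends to reorient incoming wires into the canonical split configuration, which the scaling by $2^{10}$ accommodates. Clauses are handled similarly: a $3$-clause uses the clause gadget, and a $2$-clause is realized by first attaching a split to the clause and routing one exit into a short ``dummy'' loop that can always be satisfied, so that effectively only two literal wires enter the clause gadget. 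This transforms every 2-clause into a 3-clause without changing satisfiability, and all added structure is polynomial in the input size.

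Finally I would argue connectivity and the traversal order for $P$ and $Q$. Each gadget contributes two base curves, an inner and an outer one, which need to be fused into the two global chains $P$ and $Q$. Viewing the arrangement of gadgets as a plane graph, I use the connection gadget to pass between two incident faces: because the connection gadget traverses each of its tilted linking segments exactly once, it behaves like an edge in the face-adjacency graph. I take a spanning tree of the dual graph of $G$ (a minimum spanning tree works) and insert one connection gadget per tree edge; a traversal of this tree visits every face at least once, so walking around faces in the induced order yields a closed walk that covers every base segment of both curves exactly once, giving two closed polygonal chains. A single application of the scissor gadget on any wire then cuts the closed curves into two open polygonal curves $P$ and $Q$ without destroying the staircase structure, which completes the placement. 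The main obstacle throughout, as noted, is bookkeeping for the directed split and for the color consistency of the long spikes at gadget interfaces; both are handled locally by inserting constantly many bend and color gadgets, so the overall complexity remains polynomial in the size of the input formula.
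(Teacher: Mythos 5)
Your overall plan---scale the grid embedding by $2^{10}$, install wire/bend/split/clause gadgets along the edges and vertices, fix the directed split by inserting a constant number of bends, insert color gadgets where the spike pattern must flip, and then stitch the inner and outer base curves together via connection gadgets along a spanning tree of the dual graph before opening the closed curves with a single scissor gadget---is essentially the construction the paper gives, including the tree-traversal argument that every face (and hence every base segment) is visited.

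There is, however, a genuine problem in your handling of 2-clauses. You propose attaching a split and ``routing one exit into a short dummy loop that can always be satisfied, so that effectively only two literal wires enter the clause gadget.'' As described, this fails either way you read it: if the dummy wire is a third input to the clause gadget whose state is unconstrained, then it can always be chosen to cover the clause spike, so the clause becomes trivially satisfiable and the reduction breaks; if instead the dummy loop does not enter the clause, then only two wires reach the clause gadget, but the clause gadget is only constructed for three incoming wires, so you would need to design and verify a new 2-input gadget. The paper's fix avoids both issues: it places a split on one of the two literal wires \emph{before} it reaches the clause, so that the same literal enters the clause gadget twice, i.e., $(v_i \lor v_j)$ is realized as $(v_i \lor v_i \lor v_j)$. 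This reuses the existing 3-input clause gadget verbatim and preserves satisfiability, since the duplicated literal carries the same forced assignment as the original. You should replace your dummy-loop construction with this literal-duplication step; the rest of your argument then goes through as in the paper.
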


Recall that the graph can be drawn on a grid. We set the length of a shortest segment (that is a segment of an edge between a 90° turn and a vertex or between two vertices) to be the grid size, i.e., the unit length.
In order to separate the vertices and/or bends from one another, we first scale the graph by a factor of 2.

\textbf{Dealing with 2-clauses.} For clauses consisting of only two literals, we simply double one of the participating literals by adding a split before the respective wire arrives at the clause (e.g., we replace the clause $(v_i \lor v_j)$ with $(v_i \lor v_i \lor v_j)$). Since we scaled by a factor of 2 beforehand, we ensured that we have a ``box'' of $2\times 2$ grid cells around each vertex inside which no other vertex is placed. We scale by 2 again and add splits as described above.

	\begin{figure}[ht]
		\centering
		\includegraphics[scale=0.9]{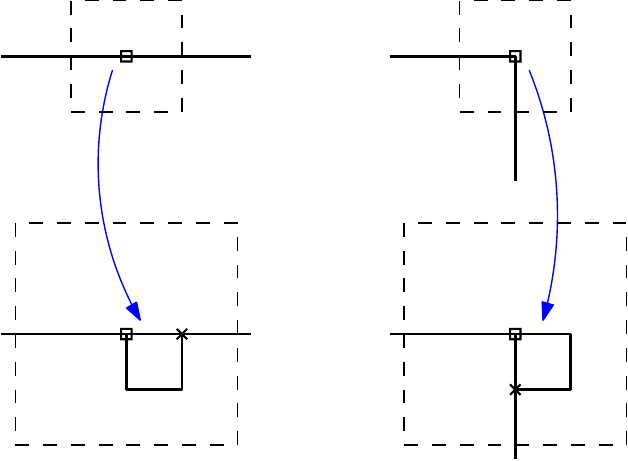}
		\caption{Possible 2-clauses and the inserted splits.}\label{fig:clausewaround}
	\end{figure}

In Figure~\ref{fig:clausewaround}, clause vertices are displayed as boxes and splits as crosses. The split insertion depends on whether the clause is entered from opposite sides or from neighboring sides. For the first case, we insert a split of the vertical wire at unit length distance from the clause vertex, for the latter case we place the split at either wire. Mirroring ensures that all possible cases are dealt with.
Note that we can place the split safely due to the scaling: we ensured that no other vertex is placed within $4 \times 4$ grid cells around the clause, such that we can place the split without interfering with any other vertex or edge of the graph.

\textbf{Dealing with directed splits.} Now, recall that our split gadget is directed and only allows to split a vertical wire into two horizontal wires. However, if a variable takes part in more than two clauses of the same type (positive or negative) or if the respective clauses have to be drawn at a different height in the input graph, we need to ensure we can exit a split vertically. In order to do so, we simply bend our entry wire three times such that it is facing the same direction as one of the desired exit wires, namely the horizontal one. 

	\begin{figure}[ht]
		\centering
		\includegraphics[scale=0.7]{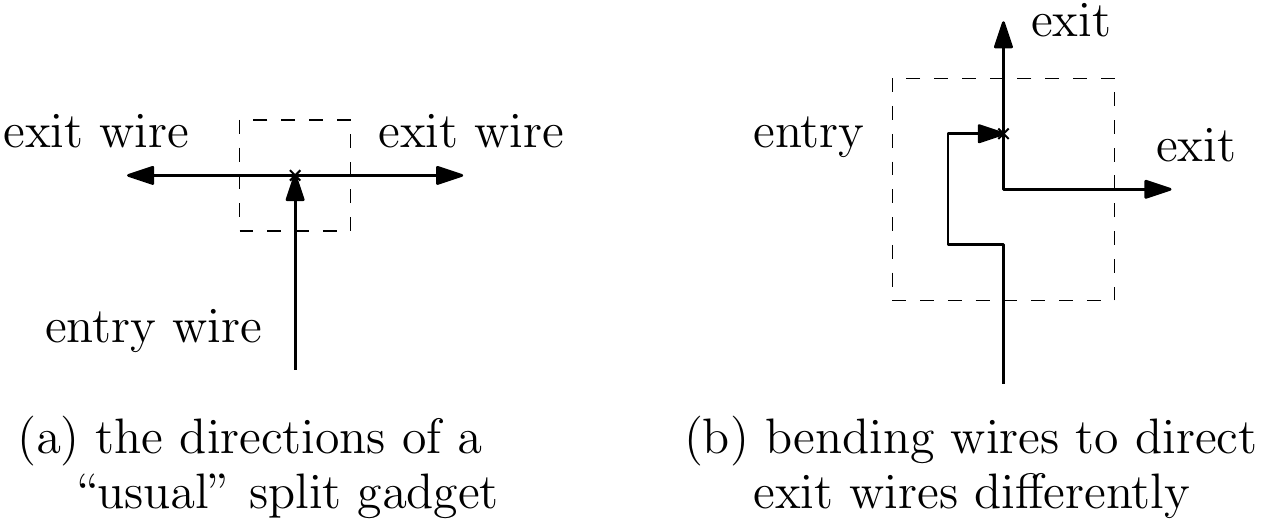}
		\caption{Bending wires to make exit wires of a split face the desired directions.}\label{fig:splitwaround}
	\end{figure}

As a result we have two vertical exit wires of which we bend the one facing downwards once, which ensures that it is then facing the correct direction. To place the bends safely, we scale again by factor 2. This is visualized in Figure~\ref{fig:splitwaround}. Note once more that the construction can be mirrored to enable all possible exit directions.

\textbf{Placing connection gadgets on wires.} Finally, we want to ensure that we can connect the inner and outer base curves of our wires at any segment. We picture to place the wires such that the tips of the long spikes would touch the edges of the input graph, i.e., the edges lie parallel to and midway in between the two base curves. To place the connection gadget, we need to have enough width to place all four tilted connection segments, two per curve $P$ and $Q$. Counting from the center of the bends (long diagonal yellow spikes) we need to be able to place 8 blue spikes (which can be counted by looking at the parallel wire above the connection gadget in Figure~\ref{fig:connection}). To place the vertices of all gadgets on grid points, we asked that two spikes of the same color should be 16 times the unit length apart. As a result we have to scale once more by factor $8\cdot 16 = 128 = 2^7$.
In total, we are scaling the input graph by a factor of $2^{10}=1024$.

\textbf{Completing the construction.} As a last step we describe how to concatenate the gadgets to build our curves.
Recall that without the connection gadget each face of the graph $G$ contains one closed curve. Now we describe in which order to connect the faces to each other.
First of all, we consider the dual graph $G'$ of the input graph $G$, i.e., we draw a second graph with vertices $v_i$ for every face $f_i$ of $G$ and edges between vertices whenever the corresponding faces are adjacent. Recall that there is also a vertex for the outer face; we call it $v$. We compute a spanning tree of $G'$ by deleting edges. As an example we used the instance of Figure~\ref{fig:input} and drew $G'$ on top, see Figure~\ref{fig:tree} below. The blue edges represent a spanning tree of $G'$, the dashed ones are the removed edges.

	\begin{figure}[ht]
		\centering
		\includegraphics[scale=0.6]{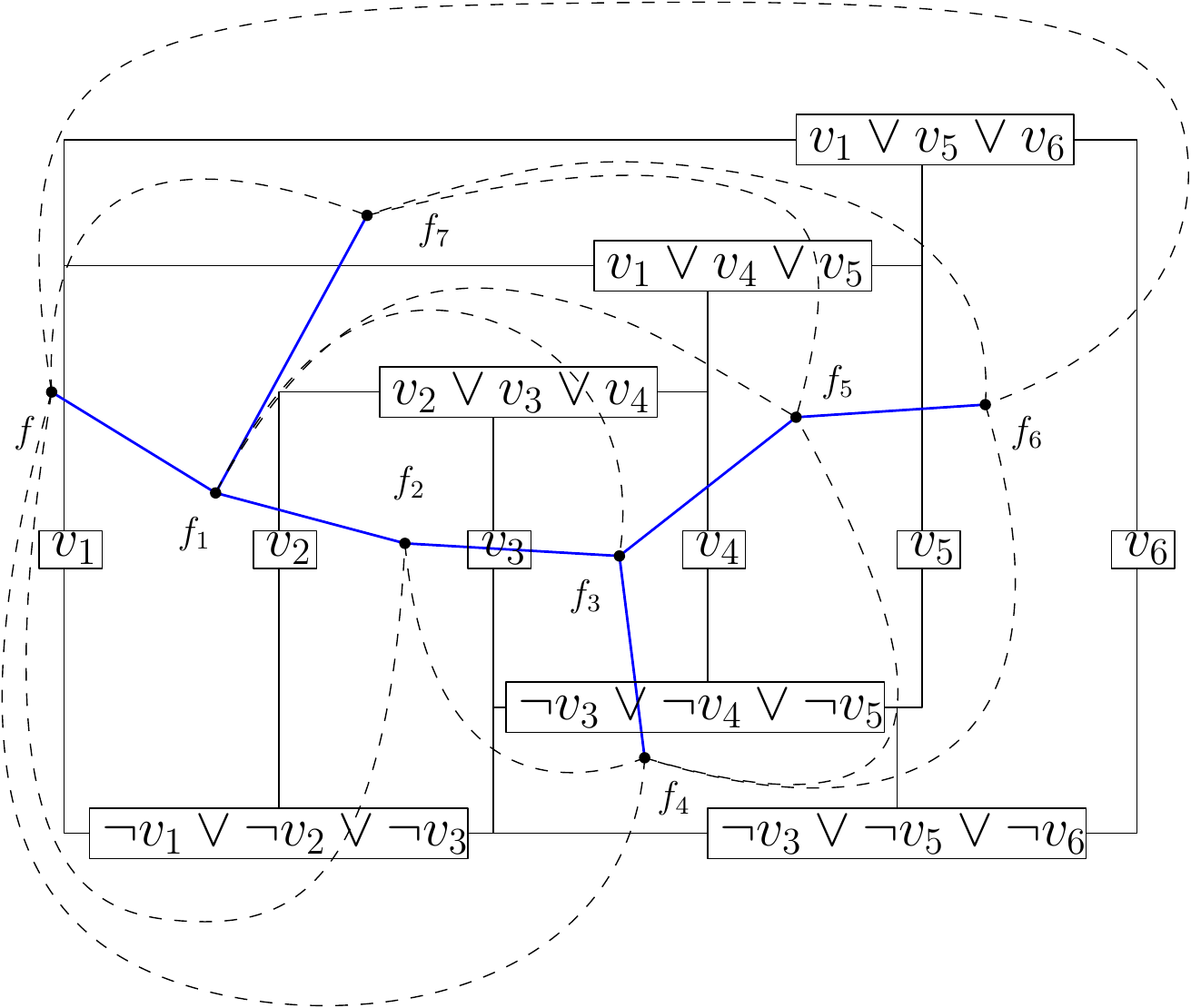}
		\caption{The input graph $G$, its dual $G'$ and its minimal spanning tree.}\label{fig:tree}
	\end{figure}

We build our curves as follows: if the
vertex $v$ is a leaf of the spanning tree, just as in Figure~\ref{fig:tree},  we can choose $v$ to be our root of the tree. Starting at the root $v$ we traverse the tree, say, in clockwise order, until we end at $v$ again. Let the tour be $\langle v, v_1, \ldots , v_i, v_1,\ldots , v_m\rangle$ and correspond to the faces $f$ (the outer face) and $f_1, \ldots , f_m$ (inner faces).  Note that we traverse each vertex $v_i$ exactly $deg(v_i)$ times.
Now we build our curves $P$ and $Q$ as follows: first, we start on the outer base curve adjacent to the first inner face $f_1$ and completely draw the outer base curve. Back at the starting point we use the first half of a connection gadget to enter the inner base curve of the first face. We traverse it until we are adjacent to the next face, say $f_7$, which we then enter through a connection gadget. Note that we only drew the part of the inner base curve of $f_1$ between entering from the outer base curve and exiting to the next face. When reentering $f_1$ we continue (clockwise) to draw the inner base curve.  The procedure is displayed in Figure~\ref{fig:trav}. Note that we used the instance from Figure~\ref{fig:input} again, but omitted the labeling: we marked variables as disks, clauses as squares and splits as crosses instead.

	\begin{figure}[ht]
		\centering
		\includegraphics[scale=0.6]{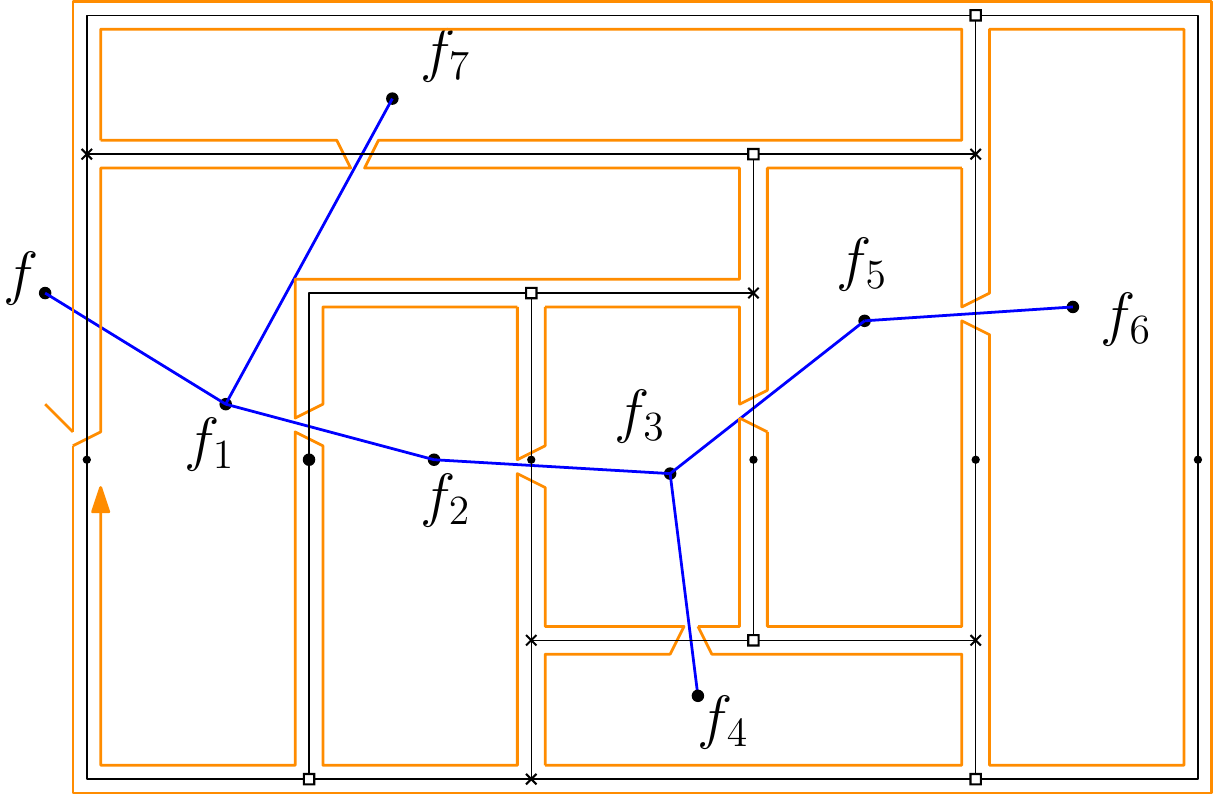}
		\caption{Building the curve $P$ corresponding to an input instance.}\label{fig:trav}
	\end{figure}

If the vertex $v$ is not a leaf but an inner vertex of the tree, we cannot traverse the whole outer base curve first. Instead, we label the vertices adjacent to $v$ clockwise, say $v_1, \ldots, v_h$ and traverse parts of the outer base curve to connect between traversals of inner faces. We start at the outer face and traverse a small part of it along the inner face $f_1$ (the first one by our ordering of the adjacent vertices). We enter this face through the connection gadget and follow the inner faces according to the subtree rooted at $f_1$. Just as explained above, we exit $f_1$ where we entered after completing the traversal of its subtree. We continue drawing the outer base curve from there on until we are adjacent to the next inner face $f_2$, which we enter again. Just as before we traverse its subtree, and so on. Following the tree in this manner results in a complete construction of the curves as we traverse the graph.

\subsection{Analysis}\label{sec:appendix-analysis}
We still need to formally prove correctness of our construction, i.e., give the detailed proof for Theorem~\ref{thm:np}, which states that deciding the $k$-Fréchet distance is NP-hard.
Recall that $k=k_b+k_c$, where $k_b$ is the number of blue spikes and $k_c$ is the number of clutter components. Notably, this means that each interval corresponding to a blue spike can only be covered by two different components.

\begin{proof}
\textbf{'$\Leftarrow$'}
Let $g:V\rightarrow \{true, false\}$ be an assignment of the variables $v \in V$ that satisfies the 3-SAT formula.
Now we explain how to cover the parameter spaces with $k$ components.
First, for each interval of the parameter spaces that can only be covered by a single component, we choose that component.
After this step we have chosen all the (necessary) clutter.
It follows that only spikes remain to be covered.
Then, for each variable $v$, if $g(v)$ is true, we cover the central spike of the corresponding wire by the adjacent yellow spike $y$ above it. We then propagate this choice: the blue spike $b$ above that yellow spike $y$ is covered by the yellow spike above $b$ and so on. For each variable set to false we do the inverse, i.e., we cover the central spike by the yellow spike below it and then propagate.

This selection of boxes has the effect that each blue box is only covered once, therefore the number of chosen boxes is precisely $k=k_c+ k_b$.
Furthermore, since $g$ is an assignment that satisfies the formula, we have that for each clause gadget the yellow clause spike is covered by at least one blue spike.
This means that both parameter spaces are completely covered by our selection.

\textbf{'$\Rightarrow$'}
In the following, we assume that we have a selection of boxes $S$ with $|S|= k$ that covers the parameter spaces.
First we know that $S$ contains all the clutter, because otherwise it would not be covering.
Now let $R$ equal $S$ minus the clutter. Due to the fact that $k=k_c + k_b$, we know that $|R|= k_b$.

We know that no spike can be covered by clutter and for each blue spike $b$ there are exactly two possible components that can cover the interval corresponding to $b$. Additionally, no component can cover more than one blue spike. It follows that $R$ contains exactly one component per blue spike, so each blue spike is covered by exactly one yellow spike.

We now define an assignment of the variables $g:V\rightarrow \{true, false\}$ as follows: 
For each variable $v$, if there is a component in $R$ that corresponds to the central spike of $v$ and the yellow spike above it, we set $g(v)$ to true. Else, i.e., if the central spike is covered by the lower adjacent spike, we set $g(v)$ to false.

Now we need to prove that $g$ actually satisfies the 3-SAT formula.
We do this by proving that each clause is satisfied.
We know that each clause spike is covered, so there is a corresponding component in $R$, and we know that each clause spike can only be covered by (at least) one of the three adjacent blue spikes. 
Let $c$ be a clause and let $b(c)$ be a blue spike that is covered by the (yellow) clause spike $y(c)$ of $c$. Now, let $v$ be the variable that corresponds to the wire of $b(c)$.

We define the \emph{staircase distance} of two spikes $a, b$ of a variable $v$ to be the number of spikes in between them plus one.
Two adjacent spikes have staircase distance 1, two spikes of the same color have at least staircase distance 2. A spike $a$ is said to be \emph{closer} to a spike $b$ (within the same variable) than a third spike $c$ if the staircase distance of $a$ and $b$ is smaller than the distance of $b$ and $c$.
Intuitively, the staircase distance counts number of ``steps'' between two spikes on the staircase corresponding to a wire in the freespace.

We prove by induction that the (blue) central spike $b(v)$ of $v$ is covered by the yellow spike that is closer to $y(c)$, which, due to the monotonicity of the formula, means that $g(v)$ is set such that $c$ is fulfilled.
Let $2j$ be the distance between $b(c)$ and $b(v)$. The distance is even because they are both blue spikes.
For $i\in \{0,\dots,j\}$, let $b_i$ be the (blue) spike between $b(c)$ and $b(v)$ whose distance to $b(c)$ is $2i$.
The induction hypothesis is that $b_i$ is covered by the adjacent yellow spike that is closer to the clause spike $y(c)$.
For $i=0$ this is true, because $b_0=b(c)$.

Now assume that the hypothesis holds for any $i\in \{1,\dots,j\}$. The spike $b_i$ is covered by the adjacent yellow spike closer to $y(c)$, which is also the one further away from $c$ from the viewpoint of $b_{i-1}$.
Now let $y$ be the yellow spike between $b_i$ and $b_{i-1}$. If $y$ is simply part of the wire, it can only be covered by one of the two adjacent blue spikes and we know that, since we only cover each blue spike once, choices are propagated.
If $y$ is a split spike, we know that it can either be covered by the spike within the entry wire, which is $b_{i-1}$, or $y$ can be covered by both other adjacent spikes simultaneously, i.e., the spikes that are within the exit wires.
Now, due to $b_i$ not being covered by $y$, we have that $b_{i-1}$ is covered by $y$, thus finishing the proof by induction.
Since $b_j=b(v)$, we have that the central spike is covered in such a way that the induced assignment value $g(v)$ of $v$ fulfills $c$.
\end{proof}

\end{document}